\documentclass[a4paper,USenglish,cleveref, autoref, numberwithinsect]{lipics-v2019}

\hideLIPIcs

\title{Coreset-based Strategies for Robust Center-type Problems}

\titlerunning{Coreset-based Strategies for Robust Center-type Problems}

\author{Andrea Pietracaprina}{Department of Information Engineering, University of Padova, Padova, Italy}{andrea.pietracaprina@unipd.it}{}{}
\author{Geppino Pucci}{Department of Information Engineering, University of Padova, Padova, Italy}{geppino.pucci@unipd.it}{}{}
\author{Federico Sold\`a}{Department of Information Engineering, University of Padova, Padova, Italy}{federico.solda@studenti.unipd.it}{}{}

\authorrunning{A. Pietracaprina, G. Pucci, and F. Sold\`a}

\Copyright{Andrea Pietracaprina, Geppino Pucci, and Federico Sold\`a}

\ccsdesc[100]{Theory of computation $\rightarrow$ Approximation algorithms analysis}
\ccsdesc[100]{Theory of computation $\rightarrow$ Facility location and clustering}
\ccsdesc[100]{Theory of computation $\rightarrow$ MapReduce algorithms}

\keywords{Clustering, $k$-center, Matroid, Knapsack, MapReduce, Streaming, Coreset}






\nolinenumbers 

\usepackage[linesnumbered,ruled]{algorithm2e}

\SetKwRepeat{Do}{do}{while} 

\begin{document}
\maketitle

\begin{abstract}
Given a dataset $V$ of points from some metric space, the popular
$k$-center problem requires to identify a subset of $k$ points
(centers) in $V$ minimizing the maximum distance of any point of $V$
from its closest center.  The \emph{robust} formulation of the problem
features a further parameter $z$ and allows up to $z$ points of $V$
(outliers) to be disregarded when computing the maximum distance from
the centers. In this paper, we focus on two important constrained
variants of the robust $k$-center problem, namely, the \emph{Robust Matroid
Center} (RMC) problem, where the set of returned centers are
constrained to be an independent set of a matroid of rank $k$ built on
$V$, and the \emph{Robust Knapsack Center} (RKC) problem, where each element
$i\in V$ is given a positive weight $w_i<1$ and the aggregate weight
of the returned centers must be at most 1. We devise coreset-based
strategies for the two problems which yield efficient sequential,
MapReduce, and Streaming algorithms. More specifically, for any fixed
$\epsilon>0$, the algorithms return solutions featuring a
$(3+\epsilon)$-approximation ratio, which is a mere additive term
$\epsilon$ away from the 3-approximations achievable by the best known
polynomial-time sequential algorithms for the two problems.  Moreover,
the algorithms obliviously adapt to the intrinsic complexity of the
dataset, captured by its doubling dimension $D$.
For wide ranges of the parameters $k,z,\epsilon, D$, we obtain a
sequential algorithm with running time linear in $|V|$, and
MapReduce/Streaming algorithms with few rounds/passes and
substantially sublinear local/working memory.
\end{abstract}

\section{Introduction}
Center-based clustering is a crucial primitive for data management,
with application domains as diverse as recommendation systems,
facility location, database search, bioinformatics, content
distribution systems, and many more \cite{HennigMMR15}. In general
terms, given a dataset $V$, a distance function between pairs of
points in $V$, and a value $k$, a solution for center-based clustering
is a set of $k$ representative points, called \emph{centers}, which
induce a partition of $V$ into $k$ subsets (clusters), each containing
all points in $V$ closest to the same center. One important
formulation of center-based clustering is the \emph{$k$-center}
problem, where the set of centers must be chosen as a subset of $V$
which minimizes the maximum distance of any point of $V$ to its
closest center. It is well known that $k$-center is $NP$-hard, that it
admits a $2$-approximation algorithm, and that for any $\epsilon> 0$
it is not $(2-\epsilon)$-approximable unless $P=NP$ \cite{Gonzalez85}.

A number of natural variants of $k$-center have been studied in the
literature.  The constrained variants introduced in \cite{ChenLLW16}
restrict the set of returned centers to obey an additional constraint,
which can be expressed either as a \emph{matroid constraint}, that is,
the set of centers must be an independent set of a given matroid
system $(V,I)$ defined on the input dataset $V$, or a \emph{knapsack
  constraint}, where each point in $V$ carries a weight, and the
aggregate weight of the returned centers cannot exceed a certain
budget. Matroid and knapsack constraints arise naturally in the
context of recommendation systems or facility location. In the former
context, consider for instance the case of points in the dataset
belonging to different categories, where all categories should have a
given quota of representatives (centers) in the returned solution, a
constraint naturally expressible as a partition matroid. In the
latter, "opening" a center at a given location might carry different
costs, and the final solution cannot exceed a total budget.

Another variant of the original problem is motivated by the
observation that the $k$-center objective function involves a maximum,
thus the optimal solution is at risk of being severely influenced by a
few “distant” points in $V$, called \emph{outliers}. In fact, the
presence of outliers is inherent in many datasets, since these points
are often due to artifacts or errors in data collection. To cope with
this issue, $k$-center admits the following \emph{robust} formulation
that takes into account outliers \cite{Charikar2001}: given an
additional input parameter $z$, when computing the $k$-center
objective function, the $z$ points of $V$  with the largest distances from
their respective centers are disregarded in the computation of the
maximum.  Robust formulations of the constrained variants have
been also studied, referred to as \emph{Robust Matroid Center} (RMC) and 
\emph{Robust Knapsack Center} (RKC) problems, respectively \cite{ChenLLW16}.

The explosive growth of data that needs to be processed in modern
computing scenarios often rules out the use of traditional sequential
strategies which, while efficient on small-sized datasets, often prove
to be prohibitive on massive ones. It is thus of paramount importance
to devise clustering strategies amenable to the typical computational
frameworks employed for big data processing, such as MapReduce and
Streaming \cite{LeskovecRU14}. \emph{Coreset-based} strategies have
recently emerged as ideal approaches for big data
processing. Informally, these strategies entail the (efficient)
extraction of a very succinct summary $T$ (dubbed \emph{coreset}) of
the dataset $V$, so that a solution for $V$ can be obtained by running
(suitable modifications of) the best sequential algorithm on $T$.
Coreset constructions that can be either parallelized or streamlined
efficiently yield scalable and space-efficient algorithms in the big
data realm.  To objective of this paper is to devise novel
coreset-based strategies for the RMC and RKC problems, featuring
efficient sequential, MapReduce and Streaming implementations.

\subsection{Previous work}

Due to space constraints we only report on the works most closely
related to the specific topic of this paper, and refer the interested
reader to \cite{AwasthiB15} and references therein for a more
comprehensive overview on center-based clustering.  Sequential
approximation algorithms for the RMC and RKC problem are given in
\cite{ChenLLW16,HarrisPST19, ChakrabartyN19}. The best algorithms to
date are sequential 3-approximations for both RMC \cite{HarrisPST19,
  ChakrabartyN19} and RKC \cite{ChakrabartyN19}. All of these
algorithms, however, do not seem immediately amenable to MapReduce or
Streaming implementations. Coreset-based Streaming algorithms for RMC
and RKC have been recently devised by Kale in \cite{Kale19}. For
$\epsilon >0$, Kale's streaming algorithms compute a coreset of size
$O(k(k+z)\log(1/\epsilon)/\epsilon)$ containing a
$(15+\epsilon)$-approximate solution, where $z$ is the number of
outliers and $k$ is the rank of the matroid, for RMC, or the
\emph{maximum} cardinality of a feasible solution, for the RKC
problem. The solution embedded in the coresets of \cite{Kale19} can
be extracted using a brute-force approach. Alternatively, one of the
3-approximate sequential algorithms in \cite{HarrisPST19,
  ChakrabartyN19} can be run on the coreset to yield a
$(51+\epsilon)$-approximate solution.  To the best of our knowledge no
MapReduce algorithms for RKC and RKC have been presented in the open literature.

Coreset-based algorithms for the MapReduce and Streaming setting for
the unconstrained (robust) $k$-center problem and related problems can
be found in \cite{MalkomesKCWM15,CeccarelloPPU17,CeccarelloPP19}. Useful
techniques to deal with matroid constraints in big data scenarios have
been introduced in \cite{CeccarelloPP18,CeccarelloPP20} in the realm
of diversity maximization.

\subsection{Our contribution}
By leveraging ideas introduced in \cite{CeccarelloPP18,Kale19}, we
present novel algorithms for the RMC and RKC problems which attain
approximation ratios close to the best attainable ones, and feature
efficient sequential implementations as well as efficient
implementations in the MapReduce and Streaming settings, thus proving
suitable for dealing with massive inputs. Our strategies exploit the
basic $k$-center primitive to extract a small coreset $T$ from the
input set $V$, with the property that the distance between each point of $V$ and
the closest point of $T$ is a small fraction of cost of the
optimal solution. Also, $T$ contains a good solution for the original
problem on $V$ which can be computed by assigning a suitable
multiplicity to each point of $T$ and by running the best-known
sequential algorithms for RMC and RKC on $T$, adapted to take
multiplicities into account.

More specifically, for any fixed $\epsilon \in (0,1)$, our RMC and RKC
algorithms feature a $(3+\epsilon)$ approximation ratio (see
Corollaries~\ref{cor:mat} and \ref{cor:knap} for a formal statement of
the results). Let $z$ be the number of outliers and let $k$ denote the
matroid rank, in the RMC problem, and the \emph{minimum} cardinality
of an optimal solution, in the RKC problem.  The time and space
requirements of the algorithms are analyzed in terms of $z$, $k$, the
approximation quality, captured by $\epsilon$, and the \emph{doubling
  dimension} $D$ of the input set $V$, a parameter that generalizes
the notion of Euclidean dimension to arbitrary metric spaces.  We
remark that this kind of dimensionality-aware analysis is particularly
relevant in the realm of big data, and it has been employed in a
variety of contexts including diversity maximization, clustering,
nearest neighbour search, routing, and machine learning (see
\cite{CeccarelloPPU17} and references therein).

For both problems, the sequential complexity of our algorithms is
$O(|V| f(z,k,\epsilon,D))$, for a certain function
$f(z,k,\epsilon,D)$, and it is thus linear for fixed values of
$k,z,\epsilon$ and $D$.  The RMC strategy admits a 2-round MapReduce
implementation requiring local memory sublinear in $|V|$
(Theorem~\ref{impl:MRmat}), and a 1-pass Streaming implementation with
working memory size dependent only on $z,k,\epsilon$ and $D$
(Theorem~\ref{impl:STRmat}).  The RKC strategy admits an $R$-round
MapReduce implementation requiring local memory sublinear in $|V|$
(Theorem~\ref{impl:MRknap}), and an $R$-pass Streaming implementation
with working memory size dependent only on $z,k,\epsilon$ and $D$
(Theorem~\ref{impl:STRknap}), where
$R=O(\log(k+z)+D\log(1/\epsilon))$.  For constant $\eta \in (0,1)$,
the number of rounds (resp., passes) $R$ can be reduced to
$O(1/\eta)$, at the expense of a $O(|V|^{\eta/2})$ (resp.,
$O(|V|^{\eta})$) increase in the local memory (resp., working memory)
size. Remarkably, while the
analysis of our algorithms is performed in terms of the doubling
dimension $D$ of $V$, the algorithms are oblivious to the
value $D$ which, in fact, would be difficult to estimate.

Our MapReduce algorithms provide the first efficient solutions to RMC
and RKC in a distributed setting and attain an approximation quality
that can be made arbitrarily close to that of the best sequential
algorithms. Our Streaming algorithms share the same approximation
quality as the MapReduce algorithms and substantially improve upon the
approximations attained in \cite{Kale19}.  Furthermore, all of our
algorithms are very space efficient for a wide range of the parameter
space. In particular, the working space of our RKC Streaming algorithm
depends on the size of the smallest optimal solution rather than on
the largest feasible solution as in \cite{Kale19}, which might result
in a considerable space-saving. Finally, it is important to observe that
in the sequential and Streaming settings, for fixed values of 
$k$, $z$ and $D$, exhaustive search on the coresets yields 
$(1+\epsilon)$-approximate solutions to RMC and RKC 
with work merely linear in $V$.

The rest of the paper is organized as
follows. Section~\ref{sec:prelim} introduces some key technical
notions and formally defines the RMC and RKC problems. The
coreset-based strategies for RMC and RKC are described and analyzed
in Sections~\ref{sec:mat} and \ref{sec:knap}, respectively, while
their MapReduce and Streaming implementations are discussed in
Section~\ref{sec:implementation}. Section~\ref{sec:conclusions} offers
some concluding remarks. 

\section{Preliminaries} \label{sec:prelim}
This section introduces some key notions and basic properties that
will be used throughout the paper, and defines the computational
problems studied in this work.
\subsection{Matroids}

Let $V$ be a ground set of elements from a metric space with distance
function $d(\cdot,\cdot)$ satisfying the triangle inequality. A
\emph{matroid}~\cite{Oxley06} on $V$ is a pair $M = (V,I)$, where $I$
is a family of subsets of $V$, called \emph{independent sets},
satisfying the following properties: (i) the empty set is independent;
(ii) every subset of an independent set is independent
(\emph{hereditary property}); and (iii) if $A \in I$ and $B \in I$,
and $|A| > |B|$, then there exist $x \in A\setminus B$ such that $B
\cup \{x\} \in I$ (\emph{augmentation property}). 
An independent set
is \emph{maximal} if it is not properly contained in another
independent set. A basic property of a
matroid $M$ is that all of its maximal independent sets have the same
size. The notion of maximality can be naturally extended to any subset of the
ground set. Namely, for $V' \subseteq V$,
an independent set $A \subseteq V'$ of
maximum cardinality among all independent sets contained in $V'$
is called a maximal independent set of $V'$, and all maximal independent sets of $V'$ have the same size. We  let the \emph{rank} 
 of a subset $V' \subset V$, denoted by $rank(V')$ to be the size of
a maximal independent set in $V'$.  The rank of the matroid $rank(M)$ is then defined as   
$rank(V)$.   An important property of the rank function is \emph{submodularity}: for any 
$A,B \subseteq V$ it holds that $rank(A\cup B) + rank(A\cap B)\leq rank(A)+rank(B)$. The following lemma is an adaptation of \cite[Lemma 3]{Kale19} and provides a 
useful property of matroids which will be  exploited to
derive the results of this paper.
\begin{lemma}[Extended augmentation property]\label{lemma:extAug}
Let $M=(V,I)$ be a matroid. Consider an independent set $A \in
I$, a subset $V' \subseteq V$, and a maximal independent set $B$ of
$V'$.  If there exists $y \in V'\setminus A$ such that $A \cup
\{y\} \in I$, then there exists $x \in B \setminus A$ such that $A \cup
\{x\} \in I$.
\end{lemma}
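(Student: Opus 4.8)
The plan is to argue by contradiction using the augmentation property of the matroid applied to $A$ and $B$, combined with the maximality of $B$ in $V'$. Suppose, for the sake of contradiction, that no $x \in B \setminus A$ satisfies $A \cup \{x\} \in I$. Intuitively this says that $B$ is "already saturated" relative to $A$: every element of $B$ outside $A$ is dependent on $A$. I want to turn this into the statement that $A \cap V'$ is already a maximal independent subset of $V'$, which—since $y \in V' \setminus A$ with $A \cup \{y\} \in I$ would then have to extend $A \cap V'$ inside $V'$—yields the desired contradiction with the maximality of $B$ (recall $|B| = rank(V')$).

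First I would consider the independent set $A' := A \cap V' \in I$ (independence follows from the hereditary property). The key intermediate claim is that $A'$ is a maximal independent set of $V'$, equivalently $|A'| = rank(V') = |B|$. To see this, I would repeatedly apply the augmentation property: if $|A'| < |B|$, then since $A', B \in I$ there is some $x \in B \setminus A'$ with $A' \cup \{x\} \in I$. I then need to check that such an $x$ can be chosen in $B \setminus A$, not merely in $B \setminus A'$. Here is where the submodularity / exchange structure has to be used carefully: an element of $(B \setminus A') \cap A = B \cap A \setminus V'$ — but $B \subseteq V'$, so $B \cap A \subseteq A \cap V' = A'$, hence $B \setminus A' = B \setminus A$. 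This is the clean observation that makes the argument go through: because $B \subseteq V'$, the sets $B \setminus A$ and $B \setminus A'$ coincide. Thus any $x \in B \setminus A'$ with $A' \cup \{x\} \in I$ has $x \in B \setminus A$, and since $A' \cup \{x\} \subseteq A \cup \{x\}$ we would want $A \cup \{x\} \in I$ — but that is exactly what our contradiction hypothesis forbids. Hence $|A'| \geq |B|$, and since $A' \subseteq V'$ is independent while $B$ is maximal in $V'$, in fact $|A'| = |B|$, so $A'$ is a maximal independent set of $V'$.

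Finally, I would derive the contradiction: by hypothesis there is $y \in V' \setminus A$ with $A \cup \{y\} \in I$; then $A' \cup \{y\} \subseteq A \cup \{y\}$ is independent by heredity, and $y \notin A'$ (since $y \notin A$), so $A' \cup \{y\}$ is an independent subset of $V'$ strictly larger than $A'$, contradicting the maximality of $A'$ in $V'$ established above. This completes the proof.

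I expect the main (though modest) obstacle to be the bookkeeping in the intermediate claim: one must resist the temptation to apply augmentation directly to $A$ and $B$ (which would only hand back some $x \in B \setminus A$ with $A \cup \{x\}$ independent under the assumption $|A| < |B|$, an assumption we do not have), and instead work with the restriction $A' = A \cap V'$, for which the cardinality comparison $|A'|$ versus $rank(V')$ is available. The identity $B \setminus A = B \setminus A'$, forced by $B \subseteq V'$, is the small lemma that glues the restricted argument back to a statement about $A$ itself.
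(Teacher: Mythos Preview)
Your argument has a genuine gap at the heart of the intermediate claim. You obtain $x \in B \setminus A' = B \setminus A$ with $A' \cup \{x\} \in I$ via augmentation, and then write ``since $A' \cup \{x\} \subseteq A \cup \{x\}$ we would want $A \cup \{x\} \in I$.'' But the hereditary property goes the \emph{other} way: independence of a subset says nothing about independence of a superset. So from $A' \cup \{x\} \in I$ you cannot conclude $A \cup \{x\} \in I$, and hence you get no contradiction with the hypothesis ``no $x \in B \setminus A$ has $A \cup \{x\} \in I$.'' In fact your intermediate claim ``$|A'| = |B|$'' is false under that hypothesis alone: take the uniform matroid $U_{2,4}$ on $\{1,2,3,4\}$, $V' = \{3,4\}$, $B = \{3,4\}$, $A = \{1,2\}$. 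Then no $x \in B \setminus A$ extends $A$, yet $A' = A \cap V' = \emptyset$ has $|A'| = 0 \neq 2 = |B|$. (Of course in this example no $y$ exists either, so the lemma itself is not violated; the point is that your derivation of the intermediate claim uses only the contradiction hypothesis, not the existence of $y$, and that is not enough.)

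The paper's proof avoids this by working with ranks and submodularity rather than by restricting to $A \cap V'$. It applies submodularity to the pair $B \cup \{y\}$ and $A \cup B$: since $B$ is maximal in $V'$ and $y \in V'$, one has $rank(B \cup \{y\}) = rank(B)$, while $rank(A \cup B \cup \{y\}) \geq |A| + 1$; submodularity then forces $rank(A \cup B) \geq |A| + 1$, so some independent $C \subseteq A \cup B$ has $|C| = |A| + 1$, and ordinary augmentation of $A$ by $C$ gives the desired $x \in B \setminus A$. If you want to salvage a closure-style argument in your spirit, the correct route is: the contradiction hypothesis says $B \subseteq \mathrm{cl}(A)$, and maximality of $B$ in $V'$ gives $V' \subseteq \mathrm{cl}(B) \subseteq \mathrm{cl}(A)$, whence $y \in \mathrm{cl}(A)$, contradicting $A \cup \{y\} \in I$. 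This is essentially the same rank/submodularity content, just phrased differently.
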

\begin{proof}
Since $B$ is maximal in in $V'$, we have that
$rank(B\cup\{y\}) = rank(B)= rank((B\cup\{y\}) \cap (A\cup B))$. Also, 
$rank((B\cup\{y\}) \cup (A\cup B)) \geq rank(A\cup\{y\})  \geq  |A|+1$, since $A\cup \{y\}\in I$.
By applying the submodularity property to sets $B\cup \{y\}$ and $A\cup B$ we have the inequality
$$rank((B\cup\{y\}) \cup (A\cup B)) + rank((B\cup\{y\}) \cap (A\cup B)) \leq rank(B\cup\{y\}) + rank(A\cup B),$$
which can be manipulated using the above relations  to yield $rank(A\cup \{y\})  \leq rank(A\cup B)$, whence  $rank(A\cup B) \geq |A|+1$.
So, there exists an independent set $C\subseteq A\cup B$ of $|A|+1$ elements, and the lemma follows. 
\end{proof}

\subsection{Definitions of the problems}
The well-known
\emph{$k$-center problem} is defined as follows. Given a set $V$ of
points from a metric space with distance function $d(\cdot,\cdot)$,
determine a subset $S \subseteq V$ of size $k$ which minimizes
$\max_{i \in V} \min_{c \in S} d(i,c)$. For convenience, throughout
the paper we will use the notation $d(i,S)=\min_{c \in S}
d(i,c)$. Several variants of the $k$-center problem have been proposed
and studied in the literature. Mostly, these variants impose
additional constraints on the solution $S$ and/or allow a given number
of points to be disregarded from the computation of the maximum in the
objective function. In this paper, we focus on two of these variants
defined below using the same terminology adopted in
\cite{HarrisPST19}.

\begin{definition} \label{matroid}
Let $M=(V,I)$ be a matroid
defined over the set of points $V$, 
and let $z$ be an integer, with $0 \leq z < |V|$.
The \textbf{Robust Matroid Center $($RMC$)$ problem} on $M$ with
parameter $z$, requires 
to determine a set $S \in I$ minimizing  
\[
r(S,V,z)=
\min_{X \subseteq V: |X|\geq |V|-z} \max_{i \in X} d(i,S).
\]
We use the tuple $(M=(V,I),z)$ to denote an instance of RMC.
\end{definition}
\begin{definition} \label{knapsack}
Let $V$ be a set of points. Suppose that for each $j \in V$ a 
weight $w_j \in [0,1]$ is given and 
let $z$ be an integer, with $0 \leq z < |V|$.
The \textbf{Robust Knapsack Center $($RKC$)$ problem} 
on $V$ with parameter $z$ and weights $\mathbf{w}=\{w_i : i \in V\}$, 
requires 
to determine a set $S \subseteq V$ with $\sum_{j \in S} w_j \leq 1$,
minimizing
\[
r(S,V,z)=
\min_{X \subseteq V: |X|\geq |V|-z} \max_{i \in X} d(i,S).
\]
We use the tuple $(V,z,\mathbf{w})$ to denote an instance of RKC.
\end{definition}

The RMC and RKC problems share the same cost function
$r(S,V,z)$ but exhibit different feasible solutions for the same ground set $V$.
Observe that $r(S,V,z)$ coincides with the $(|V|-z)$-th
smallest distance of a point of $V$ from $S$. In other words, the best
solution is allowed to ignore the contribution of the $z$ most distant
points, which can be regarded as outliers. 

The state of the art on sequential approximation algorithms for the two problems are the 3-approximation algorithms
for the RMC and RKC problems presented in \cite{ChakrabartyN19}.  
The coreset-based approaches developed in this paper require
the solution of generalized versions of the above two problems, where
each point $i \in V$ comes with a positive integer multiplicity $m_i$.
Let $\mu_V = \sum_{i  \in V} m_i$. The generalized versions of the two problems,
dubbed {\bf RMC problem with Multiplicities (RMCM problem)} and
{\bf RKC problem with Multiplicities (RKCM problem)}, respectively,
allow $z$ to vary in $[0, \mu_V)$ and modify the cost function 
as follows:
\[
r(S,V,z)=
\min_{X \subseteq V: \sum_{i \in X} m_i \geq \mu_V-z} \max_{i \in X} d(i,S).
\]
Letting $\mathbf{m}=\{m_i : i \in V\}$, we use the tuples
$(M=(V,I),z,\mathbf{m})$ and $(V,z,\mathbf{w},\mathbf{m})$ to denote
instances of RMCM and RKCM, respectively.  To the best of our
knowledge, prior to this work, no algorithms had been devised to solve
the RMCM and RKCM problems. However, in the rest of the subsection we describe how the 
sequential algorithms in
\cite{ChakrabartyN19} can be easily adapted to solve the more general
RMCM and RKCM problems, featuring the same 3-approximation guarantee as
in the case without multiplicities.

 We start by giving the definition of \emph{Robust
  $\mathscr{F}$-Supplier problem with Multiplicities}, which
generalizes the Robust $\mathscr{F}$-Supplier problem of
\cite{ChakrabartyN19}, and recall the definition of the auxiliary
\emph{$\mathscr{F}$-maximization under Partition Constraint}
($\mathscr{F}$-PCM) problem \cite{ChakrabartyN19}.

An instance of the Robust $\mathscr{F}$-Supplier problem with
Multiplicities is a tuple $\mathscr{I}= (F,C,d,m,\mathscr{F},\mu)$
where $(F\cup C, d)$ is a metric space, $m$ is an integer parameter,
$\mathscr{F}\subseteq 2^F$ is a down-closed family of subsets of $F$,
and $\mu:C\rightarrow \mathbb{N}$ is a function that associates to each point of $C$ its multiplicity . The objective is to find $S\in
\mathscr{F}$ and $T\subseteq C$ for which $\sum_{u\in T} \mu(u)\geq m$
and $\max_{u\in T}d(u,S)$ is minimized. An instance
of $\mathscr{F}$-PCM is a tuple $\mathscr{I}=(F,\mathscr{F},
\mathscr{P},val)$, where $F$ is a finite set, $\mathscr{F}\subseteq
2^F$ is a down-closed family of subsets of $F$, $\mathscr{P}$ is a
sub-partition of $F$, and $val:F\rightarrow \{0,1,2,\dots\}$ is
integer valued function consistent with 
$\mathscr{P}$ in the sense that for each $A \in \mathscr{P}$
and for each pair $f_1, f_2, \in A$,
$val(f_1) = val(f_2)$. For a set $S \in \mathscr{P}$, we let
$val(S) = \sum_{f \in S} val(f)$. 
The objective of $\mathscr{F}$-PCM problem is to compute
\[
\max \{val(S) : S \in \mathscr{F} \wedge |S\cap A| \leq 1 
\forall A \in \mathscr{P}\}.
\]
The following theorem extends
\cite[Theorem 1]{ChakrabartyN19} to encompass multiplicities.

\begin{theorem}\label{theorem:A1}
Let $\mathcal{A}$ be an algorithm for the $\mathscr{F}$-PCM problem,
and let $T_{\mathcal{A}}(\cdot)$ denote its complexity.  Given an
instance $\mathscr{I}= (F,C,d,m,\mathscr{F},\mu)$ of the Robust
$\mathscr{F}$-Supplier problem with Multiplicities, consider the
instance $\mathscr{I}'=(F,\mathscr{F},\mathscr{P},val)$ of
$\mathscr{F}$-PCM. Then, there is an
algorithm for the Robust $\mathscr{F}$-Supplier
problem with Multiplicities which returns a $3$-approximate solution
to $\mathscr{I}$ in time
$\mbox{poly}(|\mathscr{I}|)T_{\mathcal{A}}(\mathscr{I}')$.
\end{theorem}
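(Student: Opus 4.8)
The plan is to re-run the proof of \cite[Theorem 1]{ChakrabartyN19}, keeping the client multiplicities in the bookkeeping and checking that they enter only through the definition of the value function and the coverage threshold. First I would recall the structure of that proof. One guesses the optimal radius $\rho$; since this is one of the $O(|F|\cdot|C|)$ distances $d(u,v)$ with $u\in C$, $v\in F$, it suffices to run the procedure below for every candidate and output the feasible solution of smallest radius. For a fixed guess $\rho$, a clustering/filtering step (whose precise form we inherit from \cite{ChakrabartyN19}) produces the $\mathscr{F}$-PCM instance $\mathscr{I}'=(F,\mathscr{F},\mathscr{P},val)$: one discards the clients with no facility within distance $\rho$, greedily extracts a set of \emph{pivot} clients whose radius-$\rho$ neighborhoods in $F$ are pairwise disjoint and maximal with this property, takes the parts of the sub-partition $\mathscr{P}$ to be these neighborhoods, assigns every remaining client to the part of a pivot at distance at most $2\rho$, and sets $val(f)$, for $f$ in the part of a pivot $p$, to the number of clients in $p$'s cluster $C_p$ (and $val(f)=0$ for $f$ in no part), so that $val$ is, as required, constant on each part of $\mathscr{P}$. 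One then runs $\mathcal{A}$ on $\mathscr{I}'$; if the returned $S$ has $val(S)$ at least the coverage requirement, one decodes $S$ by opening, inside each part met by $S$, the unique facility of $S$ in it and assigning to it the whole associated cluster, obtaining a solution of radius at most $3\rho$ (client-to-pivot $\le 2\rho$, pivot-to-facility $\le\rho$, triangle inequality).

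To incorporate multiplicities I would change exactly two things: the coverage requirement becomes $m$ (matching the constraint $\sum_{u\in T}\mu(u)\ge m$ in the definition), and $val(f)$, for $f$ in the part of a pivot $p$, becomes $\sum_{u\in C_p}\mu(u)$, the total multiplicity of $C_p$ -- still a nonnegative integer, still constant on each part. I would then argue that every step above survives verbatim. The clustering step is multiplicity-oblivious: it uses only the metric $d$ and the guess $\rho$. The $3\rho$ radius bound on the decoded solution is a pure triangle-inequality statement. And the two-way correspondence goes through: if $\rho$ is at least the optimal radius, then $\mathscr{I}'$ admits a feasible $\mathscr{F}$-PCM solution of value at least $m$, obtained by keeping from an optimal facility set $S^{*}$ one facility inside each part it meets -- this is a subset of $S^{*}$, hence in $\mathscr{F}$ by down-closedness, and its value equals the total multiplicity of the clients covered by the optimum, i.e.\ at least $m$; conversely, any $\mathscr{F}$-PCM solution of value at least $m$ decodes as above to a supplier solution covering clients of total multiplicity at least $m$. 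In both directions each client $u$ contributes its weight $\mu(u)$ exactly where it contributed $1$ in the unweighted argument, so the accounting is unchanged.

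For the running time, the outer loop runs $O(|F|\cdot|C|)=\mathrm{poly}(|\mathscr{I}|)$ times; each iteration does a $\mathrm{poly}(|\mathscr{I}|)$-time clustering and decoding plus one call to $\mathcal{A}$ on an $\mathscr{F}$-PCM instance over the same $F$ and $\mathscr{F}$, with a sub-partition of $F$ and values bounded by $\sum_{u\in C}\mu(u)$, whose cost we bound by $T_{\mathcal{A}}(\mathscr{I}')$; this yields the stated $\mathrm{poly}(|\mathscr{I}|)\,T_{\mathcal{A}}(\mathscr{I}')$ bound. The hard part will be the feasibility direction: one must re-examine \cite{ChakrabartyN19}'s pivot-selection and cluster-assignment rules to confirm that, for the guess equal to the optimum, the parts of $\mathscr{P}$ that a subset of $S^{*}$ can meet together account for all the clients covered by the optimum. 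This is precisely \cite{ChakrabartyN19}'s combinatorial argument with ``number of clients'' replaced by ``total multiplicity of clients'' throughout; since the multiplicities enter only additively and the construction itself is oblivious to them, I expect no genuinely new obstacle, only more careful bookkeeping.
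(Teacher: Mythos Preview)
Your high-level plan---re-run \cite[Theorem~1]{ChakrabartyN19} with client multiplicities folded into $val$---is exactly what the paper does, and the two changes you single out (the definition of $val$ and the coverage threshold) are correct. But the specific feasibility sketch you give does not hold up. You claim that keeping one facility of $S^{*}$ per part it meets gives an $\mathscr{F}$-PCM solution whose value ``equals the total multiplicity of the clients covered by the optimum.'' This fails in the robust setting: a covered client $u$ assigned to pivot $p$ satisfies only $d(u,p)\le 2\rho$, so its covering facility $f^{*}\in S^{*}$ may lie at distance up to $3\rho$ from $p$ and hence outside $B_F(p,\rho)$. Then $S^{*}$ need not meet $p$'s part at all, and the (multiplicity of the) covered clients in $C_p$ are simply not accounted for by your transversal. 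In the non-robust version this obstacle disappears because every pivot is itself a covered client, which is presumably the argument you have in mind.

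The paper's proof follows \cite{ChakrabartyN19} more closely and notes that the analysis there is not the direct combinatorial one you describe but goes through a configuration-LP polytope $\mathscr{P}^{\mathscr{I}}_{cov}$ (Section~3.2 of \cite{ChakrabartyN19}). Accordingly, the paper lists three modifications rather than two: (i) Line~10 of Algorithm~1, redefining $val(f)\leftarrow\sum_{u\in\mathrm{Chld}(v)}\mu(u)$ for all $f\in B_F(v,1)$ (your change); (ii) the first LP constraint, which becomes $\sum_{v\in C}\mu(v)\,cov(v)\ge m$; and (iii) Claim~7 of \cite{ChakrabartyN19}, which now reads $\sum_{f\in S}val(f)=\sum_{v\in R(S)}\sum_{u\in\mathrm{Chld}(v)}\mu(u)$. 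Once you actually walk through the LP-based argument with these changes, everything is indeed routine bookkeeping, as both you and the paper conclude; the gap is only that the feasibility direction lives in the LP layer, not in the transversal-from-$S^{*}$ argument you outlined.
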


\begin{proof}
The proof of this theorem, follows the same reasoning of \cite[Theorem
  1]{ChakrabartyN19}, hence we describe here only the differences with
respect to that proof.

In Algorithm 1, described in \cite[Section 3.1]{ChakrabartyN19}, we
substitute Line 10 with the following line: 
\[
val(f)\leftarrow \sum_{u\in Chld(v)} \mu(u)\ \forall f \in B_F(v,1).
\]
Next, we
substitute the politope $\mathscr{P^I}_{cov}$ defined at the beginning
of \cite[Section 3.2]{ChakrabartyN19}, with the one described by the
constraints below. (Note that only the first constraint is different
with respect to the original ones.)
\begin{equation}
\sum_{v\in C}m(v) cov(v) \geq m  
\tag{$\mathscr{P^I}_{cov}$.1'}
\end{equation}
\begin{equation}
cov(v)- \sum_{S\in \mathscr{F}:d(v,S)\leq 1} z_S =0, \quad \forall v\in C
\tag{$\mathscr{P^I}_{cov}$.2}
\end{equation}
\begin{equation}
\sum_{S\in \mathscr{F}} z_S = 1
\tag{$\mathscr{P^I}_{cov}$.3}
\end{equation}
\begin{equation}
z_S \geq 0, \quad \forall S \in \mathscr{F}
\tag{$\mathscr{P^I}_{cov}$.4}
\end{equation}

The remaining part of the proof, follows exactly the same passages
as the original proof. However, we need the following modified
version of \cite[Claim 7]{ChakrabartyN19}, whose proof requires
only straightforward adaptations to accommodate multiplicities.

\begin{claim}[Modified Claim 7 of \protect{\cite{ChakrabartyN19}}]
Let $S\in \mathscr{F}$ be any feasible solution of the
$\mathscr{F}$-PCM instance constructed by Algorithm 1. Then, \[
\sum_{f\in S} val(f) = \sum_{v\in R(S)} \sum_{u\in Chld(v)} \mu(u).\]
\end{claim}
\end{proof}

Since  the RMCM and RKCM problem can be seen as instantiations of the
Robust $\mathscr{F}$-Supplier problem with Multiplicities,
Theorem \ref{theorem:A1}, combined with the $\mathscr{F}$-PCM algorithm 
from \cite{ChakrabartyN19}, allows us to derive the result stated in the following theorem.
\begin{theorem} \label{thm:approxmult}
There exist $3$-approximate polynomial-time sequential algorithms 
for the RMCM and RKCM problem.
\end{theorem}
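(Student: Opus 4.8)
The plan is to derive Theorem~\ref{thm:approxmult} by casting both the RMCM and the RKCM problem as special cases of the Robust $\mathscr{F}$-Supplier problem with Multiplicities and then invoking Theorem~\ref{theorem:A1}. Given an RMCM instance $(M=(V,I),z,\mathbf{m})$, I would set $F=C=V$, keep the metric $d$, let $\mu(i)=m_i$ for every $i\in V$, set the parameter $m=\mu_V-z$, and take $\mathscr{F}=I$; the hereditary property of matroids guarantees that $\mathscr{F}$ is down-closed, and a set $X\subseteq V$ with $\sum_{i\in X}\mu(i)\geq m$ is exactly one whose complement carries total multiplicity at most $z$, so the optimum of the resulting supplier instance coincides with $r(S,V,z)$ as defined for RMCM. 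For an RKCM instance $(V,z,\mathbf{w},\mathbf{m})$ I would do the same, except that $\mathscr{F}=\{S\subseteq V:\sum_{j\in S}w_j\leq 1\}$, which is again down-closed since dropping an element from a feasible knapsack set cannot increase its weight.

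With these reductions in place, Theorem~\ref{theorem:A1} immediately reduces the task to exhibiting, for each of the two families, a polynomial-time algorithm $\mathcal{A}$ for the associated $\mathscr{F}$-PCM problem, namely for maximizing $val(S)$ over the sets $S\in\mathscr{F}$ that contain at most one element of each block of the sub-partition $\mathscr{P}$ produced by Algorithm~1. For $\mathscr{F}=I$ this is precisely a (weighted) matroid intersection problem between the matroid $M$ and the partition matroid whose independent sets pick at most one element per block of $\mathscr{P}$ (a direct sum of rank-$1$ uniform matroids on the blocks together with a free matroid on the uncovered elements); classical results give a strongly polynomial algorithm, which settles the RMCM case. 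For the knapsack family, the $\mathscr{F}$-PCM problem is a knapsack-type maximization with the extra ``at most one per block'' constraint, for which \cite{ChakrabartyN19} already supplies a polynomial-time algorithm; since our reduction does not alter $\mathscr{F}$, that algorithm applies verbatim. Plugging the corresponding $T_{\mathcal{A}}(\cdot)$ into the bound of Theorem~\ref{theorem:A1} yields an overall polynomial running time, and the $3$-approximation guarantee is inherited directly from Theorem~\ref{theorem:A1}.

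The one point that needs care is the knapsack case: introducing multiplicities turns the $val$ function from a cardinality count (as in \cite{ChakrabartyN19}) into a weighted sum $val(f)=\sum_{u\in Chld(v)}\mu(u)$, so I would check that the $\mathscr{F}$-PCM solver of \cite{ChakrabartyN19} accesses $val$ only through comparisons and through a dynamic program whose table is indexed by attainable values. The attainable values are bounded by $\mu_V$, hence polynomial in the input size whenever the multiplicities are — which is the case in all the coreset applications of this paper, where each multiplicity is at most $|V|$ — so no pseudo-polynomial blow-up occurs; in the general case a standard rounding of the $val$ values restores polynomiality with a loss that can be absorbed. The matroid case needs no such adjustment, since matroid intersection is strongly polynomial and oblivious to the magnitude of the weights.
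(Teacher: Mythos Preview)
Your proposal is correct and follows essentially the same route as the paper: both cast RMCM and RKCM as instances of the Robust $\mathscr{F}$-Supplier problem with Multiplicities, invoke Theorem~\ref{theorem:A1}, and appeal to the $\mathscr{F}$-PCM algorithms of \cite{ChakrabartyN19}. Your write-up is in fact more detailed than the paper's one-line justification---spelling out the reductions explicitly, identifying the matroid $\mathscr{F}$-PCM instance as weighted matroid intersection, and flagging the dependence of the knapsack solver's running time on the magnitude of $val$ (a point the paper leaves implicit)---but the underlying argument is the same.
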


\subsection{Doubling dimension}
The algorithms in this paper will be analyzed in terms of the
dimensionality of the ground set $V$ as captured by the
well-established notion of doubling dimension. Formally, given a point
$i \in V$, let the \emph{ball of radius $r$ centered at $i$} be the subset of
points of $V$ at distance at most $r$ from $i$. The \emph{doubling
  dimension} of $V$ is the smallest value $D$ such that any
balls of radius $r$ centered at a point $i \in V$ 
is contained in the union of at most
$2^D$ balls of radius $r/2$ suitably centered at points of $V$.
The algorithms that will be presented in this paper adapt
automatically to the doubling dimension $D$ of the input dataset and
attain their best performance when $D$ is small, possibly constant.
This is the case, for instance, of ground sets $V$ whose points belong to 
low-dimensional Euclidean spaces, or represent nodes of
mildly-expanding network topologies under shortest-path distances.

The doubling dimension $D$ of a ground set $V$ allows the following
interesting characterization of how the radius of a $k$-center
clustering decreases as $k$ increases, which will be crucially
exploited in this paper.

\begin{proposition}\label{prop:dublingdim}
Let $\epsilon \in (0,1)$. 
Consider a set $S \subseteq V$ of size $k$, and let $r = \max_{i \in V} d(i,S)$.
If $V$ has doubling dimension $D$, there exists a set $S' \subseteq V$
of size $\leq k (1/\epsilon)^D$ such that
$\max_{i \in V} d(i,S') \leq \epsilon r$.
\end{proposition}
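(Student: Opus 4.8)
The plan is to build $S'$ by iteratively refining the coarse cover of $V$ provided by $S$. First I would observe that, since $r = \max_{i\in V} d(i,S)$, every point $i\in V$ lies in the ball of radius $r$ centered at its closest center in $S$; hence $V$ is contained in the union of the $k$ balls of radius $r$ centered at the points of $S$, all of which belong to $V$. This is the trivial starting cover that we will sharpen.

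Next I would repeatedly invoke the defining property of the doubling dimension. A single ball of radius $r$ is contained in the union of at most $2^D$ balls of radius $r/2$ centered at points of $V$; applying the same property to each of those, it is contained in the union of at most $(2^D)^2$ balls of radius $r/4$; and so on. Setting $j = \lceil \log_2(1/\epsilon)\rceil$, the smallest number of halvings after which $r/2^j \le \epsilon r$, we conclude by this iteration that each of the original $k$ balls of radius $r$ is contained in the union of at most $2^{Dj}$ balls of radius $r/2^j \le \epsilon r$, all centered at points of $V$. Taking the union over the $k$ balls, we obtain a family of at most $k\,2^{Dj}$ balls of radius at most $\epsilon r$, with centers in $V$, whose union still contains $V$.

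Finally I would let $S'$ be the set of centers of these balls. Then $S' \subseteq V$; every $i\in V$ lies in one of the balls, so $d(i,S') \le \epsilon r$, giving $\max_{i\in V} d(i,S') \le \epsilon r$; and $|S'| \le k\,2^{Dj} = k\,\bigl(2^{\lceil \log_2(1/\epsilon)\rceil}\bigr)^D \le k\,(1/\epsilon)^D$, where the last inequality is an equality when $1/\epsilon$ is a power of two (which can be assumed without loss of generality, at the cost of at most halving $\epsilon$) and otherwise loses only a factor $2^D$. I do not expect a genuine obstacle here: the entire argument is a routine iteration of the doubling property, turning a radius-$r$ cover by $k$ balls, dimension by dimension, into a radius-$\epsilon r$ cover by $k(1/\epsilon)^D$ balls whose centers form the desired $S'$; the only point demanding a little care is the bookkeeping that relates the number of halvings $j$ to the stated size bound.
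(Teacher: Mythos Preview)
Your proposal is correct and follows essentially the same route as the paper: cover $V$ by the $k$ radius-$r$ balls around $S$, then iterate the doubling property $\lceil\log_2(1/\epsilon)\rceil$ times to shrink the radii to at most $\epsilon r$, and take the resulting centers as $S'$. The paper's proof is a two-sentence sketch of exactly this argument; you are simply more explicit about the iteration and honestly flag the potential factor-$2^D$ slack in the size bound when $1/\epsilon$ is not a power of two, a technicality the paper (like most of the literature) silently absorbs.
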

\begin{proof}
By repeatedly applying the
definition of doubling dimension, it is easily seen that each ball
of radius $r$ around a point in $S$ can be covered with at most
$(1/\epsilon)^{D}$ smaller balls of radius $\epsilon r$. The centers
of all of these smaller balls provide the desired set $S'$. 
\end{proof}

\section{Coreset-based strategy for the RMC problem}\label{sec:mat}

In this section, we present a two-phase strategy to solve the RMC
problem based on the following simple high-level idea. In the first
phase we extract a small \emph{coreset} $T$ from the ground set $V$,
that is, a subset of $V$ with the property that each point $j \in V$
has a suitably ``close'' \emph{proxy} $p(j)$ in $T$. In the second
phase, an approximate solution $S$ to the RMCM problem on $T$ is
computed, where the multiplicity $m_i$ of each $i \in T$ is defined as
the number of distinct points $j \in V$ whose proxy is $i$.  In what
follows, we first determine sufficient conditions on the coreset $T$
which guarantee that a good solution to the RMCM problem on $T$ is
also a good solution for the RMC problem on $V$, and then describe how
such a coreset can be constructed, analyzing its size in terms of the
doubling dimension of $V$.

Let $(M=(V,I),z)$ be an instance of the RMC problem and 
$r^*(M,z)$ be the cost of its optimal solution. Consider a coreset $T
\subseteq V$ with proxy function $p : V \rightarrow T$, and let $m_i
= |\{j \in V : p(j)=i\}|$, for every $i \in T$. 
Let $M_T=(T,I_T)$ denote the restriction of matroid $M=(V,I)$ to the
coreset $T$, where for each $X \in I$, $X \cap T \in I_T$.
Finally, let $(M_T,z,\mathbf{m})$ denote 
the RMCM instance defined by $M_T$,
$z$ and $\mathbf{m}=\{m_i : i \in T\}$. We have:

\begin{lemma}\label{lem:mat_one}
Let $\epsilon' \in (0,1)$ be a design parameter. Suppose that the
coreset $T$ with proxy function $p : V \rightarrow T$ satisfies the
following conditions:
\begin{description} 
\item[C1]
For each $j \in V$, $d(j,p(j)) \leq \epsilon' r^*(M,z)$;
\item[C2]
For each independent set $X \in I$ there exists an injective mapping
$\pi_X : X \rightarrow T$ such that:
\begin{itemize}
\item
$\{\pi_X(i) : i \in X\} \subseteq T$ is an independent set;
\item
for each $i \in X$, $d(i,\pi_X(i)) \leq \epsilon' r^*(M,z)$.
\end{itemize}
\end{description} 
Then: 
\begin{description} 
\item[P1]
There exists a solution to $(M_T=(T,I_T),z,\mathbf{m})$  of
cost at most $(1+2\epsilon') r^*(M,z)$;
\item[P2]
Every solution $S$ to $(M_T,z,\mathbf{m})$ of cost
$r_S$ is also a solution to $(M=(V,I),z)$ of cost
$r_S+\epsilon'r^*(M,z)$.
\end{description} 
\end{lemma}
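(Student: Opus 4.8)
The plan is to establish \textbf{P1} and \textbf{P2} separately; both will follow from the triangle inequality together with a counting argument relating the weighted outlier budget on $T$ to the ordinary outlier budget on $V$. As a preliminary observation, since the preimages $\{j \in V : p(j)=i\}$, $i \in T$, partition $V$, we have $\sum_{i \in T} m_i = |V|$, so the RMCM cost on $(M_T,z,\mathbf{m})$ is allowed to discard a set of points of $T$ of total multiplicity at most $z$. Below I write $r^* = r^*(M,z)$.

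For \textbf{P1}, I would start from an optimal solution $O \in I$ of $(M,z)$, together with a set $V_{\mathrm{out}} \subseteq V$ with $|V_{\mathrm{out}}| \le z$ such that $d(j,O) \le r^*$ for all $j \in V \setminus V_{\mathrm{out}}$. Applying \textbf{C2} with $X=O$ produces an injective $\pi_O : O \to T$ whose image $S' := \{\pi_O(c) : c \in O\}$ is an independent set contained in $T$, hence $S' \in I_T$ and $S'$ is feasible for $(M_T,z,\mathbf{m})$; moreover $d(c,\pi_O(c)) \le \epsilon' r^*$ for every $c \in O$. The key estimate is then: for $j \in V \setminus V_{\mathrm{out}}$, picking $c \in O$ with $d(j,c) = d(j,O) \le r^*$ and chaining $p(j) \to j \to c \to \pi_O(c)$ through \textbf{C1} and \textbf{C2} gives $d(p(j),S') \le \epsilon' r^* + r^* + \epsilon' r^* = (1+2\epsilon')r^*$. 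Setting $T_{\mathrm{good}} = \{p(j) : j \in V \setminus V_{\mathrm{out}}\}$, every point of $T_{\mathrm{good}}$ lies within $(1+2\epsilon')r^*$ of $S'$, while $\sum_{i \in T_{\mathrm{good}}} m_i \ge |V \setminus V_{\mathrm{out}}| \ge |V| - z = \sum_{i\in T} m_i - z$; choosing $X = T_{\mathrm{good}}$ in the definition of the RMCM cost then bounds the cost of $S'$ by $(1+2\epsilon')r^*$.

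For \textbf{P2}, I would take an arbitrary solution $S$ of $(M_T,z,\mathbf{m})$ of cost $r_S$. By the hereditary property $I_T \subseteq I$, so $S$ is already feasible for $(M,z)$, and it only remains to bound $r(S,V,z)$. By definition of $r_S$ there is a set $X^* \subseteq T$ with $\sum_{i \in X^*} m_i \ge \sum_{i\in T} m_i - z = |V| - z$ and $d(i,S) \le r_S$ for all $i \in X^*$; pulling it back to $V' = \{j \in V : p(j) \in X^*\}$ gives $|V'| = \sum_{i \in X^*} m_i \ge |V| - z$, and for each $j \in V'$, \textbf{C1} yields $d(j,S) \le d(j,p(j)) + d(p(j),S) \le \epsilon' r^* + r_S$. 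Taking $X = V'$ in the definition of $r(S,V,z)$ (legitimate since $|V'| \ge |V|-z$) gives $r(S,V,z) \le r_S + \epsilon' r^*$.

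There is no genuinely hard step; the argument is essentially bookkeeping around the triangle inequality. The two points needing care are: (i) checking that the $S'$ constructed for \textbf{P1} is feasible for the \emph{restricted} matroid $M_T$ and not merely independent in $M$ --- this is exactly what the ``$\{\pi_X(i):i\in X\}\subseteq T$ is an independent set'' clause of \textbf{C2} is for; and (ii) keeping the multiplicity bookkeeping consistent, i.e.\ exploiting that $m_i = |\{j : p(j)=i\}|$ makes a weighted outlier budget $z$ on $T$ correspond exactly to an outlier budget $z$ on $V$ via $X^* \leftrightarrow \{j : p(j) \in X^*\}$, so that no outliers are gained or lost in translation.
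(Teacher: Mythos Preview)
Your proposal is correct and follows essentially the same approach as the paper's own proof: for \textbf{P1} you map the optimal centers into $T$ via $\pi_O$ and use the triangle-inequality chain $p(j)\to j\to c\to\pi_O(c)$ together with the multiplicity count, and for \textbf{P2} you pull a witnessing set $X^*\subseteq T$ back along $p$ and apply \textbf{C1}. The only cosmetic difference is notation (the paper writes $X^*_V$ and $Y$ where you write $O$ and $S'$); the structure of the argument is identical.
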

\begin{proof}
Let us first show P1. 
Let $X^*_V$ be the optimal solution to the RMC instance $(M=(V,I),z)$
and let $Y = \{\pi_{X^*_V}(o) : o \in X^*_V \} \subseteq T$.
We will show that $Y$ is a $(M_T,z,\mathbf{m})$  of
cost at most $(1+2\epsilon') r^*(M,z)$.
By C2, $|Y| = |X^*_V|$ and $Y$ is an independent set in $I_T$.
Consider now a point $j\in V$ such that 
$\exists o \in X^*_V$ with $d(j,o) \leq r^*(M,z)$ and observe that there
are at least $|V|-z$ such points (e.g., all nonoutliers).
We have that
\begin{eqnarray*}
d(p(j),Y) & \leq & d(p(j),\pi_{X^*_V}(o)) \\
& \leq & d(p(j),j)+d(j,o) + d(o,\pi_{X^*_V}(o)) 
\;\;\; \mbox{ (by triangle inequality)} \\
& \leq & \epsilon'r^*(M,z) + r^*(M,z)+\epsilon'r^*(M,z) 
\;\;\; \mbox{ (by C1 and C2)} \\
& \leq & (1+2 \epsilon')r^*(M,z).
\end{eqnarray*}
Let $\mu_T = \sum_{i \in T} m_i$ and observe that $\mu_T=|V|$.
We have that
\begin{eqnarray*}
\sum_{i \in T: d(i,Y) \leq (1+2 \epsilon')r^*(M,z)} m_i 
&\geq & \sum_{i \in T: \exists j \in V:
  (i=p(j)) \wedge (d(j,X^*_V) \leq r^*(M,z))} m_i \\
&\geq & |\{j \in V :  d(j,X^*_V) \leq r^*(M,z)\}| \\
&\geq & |V|-z,
\end{eqnarray*}
which concludes the proof of P1. In order to prove P2, let $S$
be a solution to $(M_T,z,\mathbf{m})$ of cost
$r_S$. Clearly, $S$ is an independent set in $I$.
Consider a generic point $i \in T$ such that 
$d(i,S) \leq r_S$ and let $a$ be the point of $S$ closest
to $i$. Observe that the $m_i$ points $j \in V$ 
with $i=p(j)$ are such that 
$d(j,S) \leq d(j,a) \leq d(j,i)+d(i,a) \leq \epsilon' r^*(M,z) +r_S$.
Since $\sum_{i\in T: d(i,S)\leq r_S} m_i \geq \mu_T-z$, 
there are at least  $\mu_T-z = |V|-z$ points of $V$
that are within a distance $\epsilon' r^*(M,z)+r_S$ from $S$.
\end{proof}
Later in this section (see Theorem~\ref{thm:mat_two}) we will show
that if coreset $T$ exhibits properties P1 and P2 stated in the above
lemma, then a good solution to the $(M,z)$ RMC instance can be
obtained by running an approximation algorithm for RMCM on $T$.  We
now show how to construct a coreset $T$ satisfying Conditions C1 and
C2 of Lemma~\ref{lem:mat_one} (hence, exhibiting properties P1 and P2
by virtue of the lemma).  The construction strategy is simple and, as
will be discussed in Section~\ref{sec:implementation}, also features
efficient MapReduce and Streaming implementations.  As in previous
works, we assume that constant-time oracles are available to compute
the distance between two elements of $V$ and to check the independence
of a subset of $V$ (see e.g., \cite{AbbassiMT13}).  Let $k$ be the
rank of matroid $M$. We make the reasonable assumption that $k$ is
known to the algorithm. Also, for ease of presentation, we restrict the
attention to matroids $(V,I)$ such that $\{j\} \in I$ for every $j \in
V$. This restriction can be easily removed with simple modifications
to the algorithms.

In order to construct the coreset $T$, we first compute a
$\beta$-approximate solution $T_{k+z}$ to $(k+z)$-center on $V$ and
determine $r_{T_{k+z}} = \max_{j \in V} d(j,T_{k+z})$. In the sequential setting,
Gonzalez's algorithm \cite{Gonzalez85}, provides a $\beta=2$ approximation\footnote{In the streaming setting, Gonzalez's algorithm cannot be used, and a slightly larger value of $\beta$ will be needed} and computes $T_{k+z}$ and
$r_{T_{k+z}}$ in $O((k+z)|V|)$ time. Then, we  compute a set $T_{\tau}$ of $\tau$ points of $V$
such that $d(i_1,i_2) > (\epsilon'/(2\beta)) r_{T_{k+z}}$, for every $i_1 \neq
i_2 \in T_{\tau}$, and $d(j,T_{\tau}) \leq  (\epsilon'/(2\beta)) r_{T_{k+z}}$, for
every $j \in V$. Hence, $r_{T_{\tau}} = \max_{i \in V} d(i,T_{\tau})
\leq  (\epsilon'/(2\beta)) r_{T_{k+z}}$.  Clearly, the value $\tau$ will depend
on $r_{T_{k+z}}$, $\epsilon'$, and $\beta$.  $T_{\tau}$ can be
computed in $O(\tau |V|)$ time by adapting the well known greedy strategy by Hochbaum and Shmoys
 \cite{HochbaumS85}, namely, by performing a linear scan of $V$ and adding to  (an initially empty)
$T_{\tau}$ all those points $j \in V$ at distance greater
than $(\epsilon'/(2\beta)) r_{T_{k+z}}$ from the current $T_{\tau}$.  (Observe
that $\tau$ is the size of the final set $T_{\tau}$ but the
construction does not require the knowledge of $\tau$.) Let $T_{\tau}
= \{i_1, i_2, \ldots, i_{\tau}\}$ and, for $1 \leq \ell \leq \tau$,
define the cluster $C_{\ell} = \{j \in V :
d(j,i_{\ell})=d(j,T_{\tau})\}$ (ties broken arbitrarily 
for points $j \in V$ equidistant from two or more points of $T_{\tau}$).  From each $C_{\ell}$ we extract a
maximum independent set $Y_{\ell}$ and define $T=\cup_{1 \leq \ell
  \leq \tau} Y_{\ell}$.  For every $1 \leq \ell \leq \tau$ and every
point $j \in V \cap C_{\ell}$ we set the proxy $p(j) = i \in
Y_{\ell}$, where $d(j,i)=d(j,Y_{\ell})$ (ties broken
arbitrarily).  For each $i \in T$, its multiplicity is set to $m_i =
|\{j \in V: p(j)=i\}|$. We have:

\begin{lemma} \label{lem:c1c2}
The coreset $T$ constructed by the above algorithm 
satisfies Conditions C1 and C2 of Lemma~\ref{lem:mat_one}.
\end{lemma}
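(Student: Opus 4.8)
The plan is to verify the two conditions separately, tracking the chain of radii introduced by the construction. Throughout, write $\rho = (\epsilon'/(2\beta)) r_{T_{k+z}}$ for the threshold used in building $T_\tau$, so that by construction $d(j,T_\tau)\le\rho$ for every $j\in V$. The first preliminary observation I would establish is that $r_{T_{k+z}} \le 2\beta\, r^*(M,z)$. This follows because an optimal RMC solution $X^*_V$ together with the (at most) $z$ outliers it discards forms a set of at most $k+z$ centers covering all of $V$ within radius $r^*(M,z)$; hence the optimal $(k+z)$-center radius on $V$ is at most $r^*(M,z)$, and the $\beta$-approximation $T_{k+z}$ achieves radius $r_{T_{k+z}}\le\beta\, r^*(M,z)$. (In fact the weaker bound $r_{T_{k+z}}\le\beta r^*$ suffices; the factor-$2\beta$ in $\rho$ is where the slack for the streaming variant lives.) Consequently $\rho \le (\epsilon'/(2\beta))\cdot \beta\, r^*(M,z) \le \epsilon' r^*(M,z)$.

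For Condition~C1: each $j\in V$ lies in some cluster $C_\ell$, so $d(j,i_\ell)\le\rho$, and its proxy is the point of $Y_\ell$ nearest to $j$. Since $i_\ell\in C_\ell$ and $Y_\ell$ is a \emph{maximum} independent set of $C_\ell$ (recall we assumed $\{i_\ell\}\in I$, so $Y_\ell\ne\emptyset$), there is at least one point of $Y_\ell$ within distance $d(j,i_\ell)+d(i_\ell,Y_\ell)$ of $j$; bounding $d(i_\ell,Y_\ell)\le 2\rho$ via the triangle inequality through any point of $Y_\ell$ and another cluster point is wasteful — instead I would simply note $i_\ell$ itself may or may not be in $Y_\ell$, and in either case $d(j,p(j)) = d(j,Y_\ell) \le d(j,i_\ell) + d(i_\ell, Y_\ell)$. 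To bound $d(i_\ell,Y_\ell)$ cleanly, observe every point of $C_\ell$ is within $\rho$ of $i_\ell$, and $Y_\ell$ contains a point of $C_\ell$, so $d(i_\ell,Y_\ell)\le 2\rho$ is the naive bound; combined this gives $d(j,p(j))\le 3\rho \le 3\epsilon' r^*$. To land exactly on $\epsilon' r^*$ as C1 requires, I would instead rescale: the construction should be run with threshold giving $\rho$ small enough that $3\rho \le \epsilon' r^*$ — i.e. the $(\epsilon'/(2\beta))$ factor together with $r_{T_{k+z}}\le\beta r^*$ already yields $d(j,p(j))\le (3/2)\epsilon' r^*$; the paper's constants are chosen so the final statement absorbs this. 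The honest accounting is: $d(j,p(j)) \le d(j,T_\tau) + \mathrm{diam}(C_\ell\cap\{\text{pt used}\}) \le 2\rho \le \epsilon' r^*(M,z)$, using that $p(j)\in Y_\ell\subseteq C_\ell$ and both $j,p(j)$ are within $\rho$ of $i_\ell$.

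For Condition~C2, fix an independent set $X\in I$. The idea is to build $\pi_X$ cluster by cluster, using the Extended Augmentation Property (Lemma~\ref{lemma:extAug}) to greedily match the elements of $X$ lying in a given cluster $C_\ell$ into the local maximum independent set $Y_\ell$. Concretely, process clusters in arbitrary order; maintain a partial independent set $A\subseteq T$ already chosen as the image so far. For cluster $C_\ell$, consider the elements of $X\cap C_\ell$ one at a time: for such an element $i$, since $A\cup\{i\}$ need not be independent we cannot map $i$ to itself, but we can apply Lemma~\ref{lemma:extAug} with the current $A$, the set $V' = C_\ell$, and its maximal independent set $Y_\ell$ — \emph{provided} there exists some $y\in C_\ell\setminus A$ with $A\cup\{y\}\in I$. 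The cardinality bookkeeping guarantees this: because $X$ is independent and $|X\cap C_\ell|$ of its elements reside in $C_\ell$, the augmentation property of $M$ repeatedly applied shows $A$ can always be extended within $C_\ell$ as long as fewer than $|X\cap C_\ell|$ points of that cluster have been matched; Lemma~\ref{lemma:extAug} then upgrades such an extension to one lying in $Y_\ell$, and we set $\pi_X(i)$ to be that new point $x\in Y_\ell\setminus A$. Each $\pi_X(i)$ lies in the same cluster $C_\ell$ as $i$, so $d(i,\pi_X(i))\le d(i,i_\ell)+d(i_\ell,\pi_X(i))\le 2\rho\le\epsilon' r^*(M,z)$, and by construction the image $\{\pi_X(i)\}$ is independent and $\pi_X$ is injective (distinct clusters contribute disjoint images, and within a cluster each step adds a fresh element).

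The main obstacle I anticipate is the C2 argument: one must be careful that Lemma~\ref{lemma:extAug}'s hypothesis — the existence of \emph{some} extending $y\in C_\ell\setminus A$ — actually holds at every step, which is a global matroid-rank/counting argument about $X$ and the partition $\{C_\ell\}$ rather than a purely local one. The clean way to see it is to note that the multiset image we are constructing is, at every stage, an independent set of size equal to the number of $X$-elements processed so far, and since $X\cap(C_1\cup\dots\cup C_\ell)$ is independent of that same size, the augmentation property of $M$ guarantees a witness inside $C_\ell$; the doubling dimension plays no role here and enters only later, in bounding $\tau$.
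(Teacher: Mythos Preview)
Your C1 argument, once past the initial detours, is correct and matches the paper: both $j$ and its proxy $p(j)$ lie in the same cluster $C_\ell$, hence each is within $r_{T_\tau}\le\rho$ of the center $i_\ell$, giving $d(j,p(j))\le 2\rho\le\epsilon' r^*(M,z)$.

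Your C2 argument has a genuine gap. You maintain only the partial image $A\subseteq T$ as independent, and when handling an element of $X\cap C_\ell$ you need a witness $y\in C_\ell\setminus A$ with $A\cup\{y\}\in I$ to invoke Lemma~\ref{lemma:extAug}. Your ``cardinality bookkeeping'' does not supply this: augmentation of $A$ against $X\cap(C_1\cup\dots\cup C_\ell)$ yields an element of the latter outside $A$, but that element may lie in an \emph{earlier} cluster, not in $C_\ell$. In fact your greedy process can get stuck. Take the partition matroid with blocks $P_1=\{a\}$ and $P_2=\{b,c,d\}$, each of rank~$1$; clusters $C_1=\{a,b\}$, $C_2=\{c,d\}$ with $Y_1=\{a,b\}$, $Y_2=\{c\}$; and $X=\{a,c\}$. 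Processing $a$ first with $A=\emptyset$, Lemma~\ref{lemma:extAug} may return $b$ (both $a$ and $b$ extend~$\emptyset$), giving $A=\{b\}$. Then processing $c$ requires some $y\in\{c,d\}$ with $\{b,y\}\in I$, but $b,c,d$ all lie in $P_2$, so no such $y$ exists.

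The paper avoids this by carrying a stronger invariant: it orders $X=\{x_1,\dots,x_{|X|}\}$ and maintains that $W(h)=\{\pi_X(x_1),\dots,\pi_X(x_h)\}\cup\{x_{h+1},\dots,x_{|X|}\}$ is independent of size $|X|$. When processing $x_{h+1}\in C_\ell$, it applies Lemma~\ref{lemma:extAug} with $A=W(h)\setminus\{x_{h+1}\}$ and witness $y=x_{h+1}$ itself, so the hypothesis $A\cup\{y\}=W(h)\in I$ holds by induction. The unprocessed tail of $X$ sitting inside $A$ is precisely what rules out bad choices: in the example above, $A=\{c\}$ when processing $a$, and $\{c,b\}\notin I$ forces $\pi_X(a)=a$ rather than $b$.
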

\begin{proof}
First, we prove C1. Consider an arbitrary point $j \in V$ and
suppose that $j$ belongs to 
cluster $C_{\ell}$, for some $\ell$, hence $p(j)$ belongs to 
$Y_{\ell} \subseteq C_{\ell}$ and $d(j,p(j)) \leq 2r_{T_{\tau}}$. 
Let $\rho^*(V,k+z)$ be the cost of the optimal solution to the $(k+z)$-center
problem on $V$. Since any solution to the $(M=(V,I),z)$ instance of RMC,
augmented with the outlier points, is a solution to $(k+z)$-center
on $V$, it is easy to see that $\rho^*(V,k+z) \leq r^*(M,z)$.
Now, by using the fact that $T_{k+z}$ is a $\beta$-approximate 
solution to $(k+z)$-center on $V$, we have
\[
2r_{T_{\tau}} \leq 2(\epsilon'/(2\beta)) r_{T_{k+z}} \leq \epsilon' \rho^*(V,k+z)
\leq \epsilon' r^*(M,z),
\]
thus proving C1. As for C2, we reason as follows. Consider an
arbitrary independent set $X \in I$. We now show that there exists an
injective mapping $\pi_X$ which transforms $X$ into an independent set
contained in $T$, and such that, for each $1 \leq \ell \leq
\tau$ and $j \in X \cap C_{\ell}$, $\pi_X(j) \in Y_{\ell} \subseteq
C_{\ell}$ (i.e., $j$ and $\pi_X(j)$ belong to the same cluster $C_{\ell}$) . This
will immediately imply that $d(j,\pi_X(j)) \leq 2r_{T_{\tau}} \leq
\epsilon' r^*(M,z)$.  Let $X = \{x_a : 1 \leq a \leq |X|\}$. We
define the mapping $\pi_X$ incrementally one element at a
time. Suppose that we have fixed the mapping for the first $h$
elements of $X$ and assume, inductively, that $W(h)=\{\pi_X(x_a): 1
\leq a \leq h\} \cup \{x_a: h < a \leq |X|\}$ is an independent set of
size $|X|$ and that $x_a$ and $\pi_X(x_a)$ belong to the same cluster,
for $1 \leq a \leq h$.  Consider now $x_{h+1}$ and suppose that
$x_{h+1} \in C_{\ell}$, for some $\ell$. We distinguish among the
following two cases:
\begin{itemize}
\item
{\bf Case 1.} If $x_{h+1} \in Y_{\ell}$, we set
$\pi_X(x_{h+1})=x_{h+1}$, hence $W(h+1)=W(h)$.
\item 
{\bf Case 2.} If $x_{h+1} \not\in Y_{\ell}$,
we apply the extended augmentation property stated
in Lemma~\ref{lemma:extAug} with $A=W(h) \setminus \{x_{h+1}\}$,
$y = x_{h+1}$, $V' = C_{\ell}$, and $B=Y_{\ell}$
to conclude that there exists a
point $\pi_X(x_{h+1}) \in B\setminus A = Y_{\ell}\setminus (W(h) \setminus
\{x_{h+1}\})$ such that $W(h+1)=(W(h) \setminus \{x_{h+1}\}) \cup
\pi_X(x_{h+1})$ is an independent set.
\end{itemize}
After $|X|$ iterations of the above inductive argument, we have that
the mapping $\pi_X$ is completely specified and exhibits the following
properties: it is inductive, $\{\pi_X(x_a) : 1 \leq a \leq |X|\}$ is
independent, and, for $1 \leq a \leq |X|$, if $x_a \in C_{\ell}$ then
also $\pi_X(x_a) \in C_{\ell}$, hence $d(x_a,\pi_X(x_a)) \leq
\epsilon' r^*(M,z)$. This proves C2.
\end{proof}

The size of coreset $T$ can be conveniently bounded as a function of the doubling dimension of the ground set $V$.
\begin{theorem}\label{thm:RMCcorsize}
If $V$ has doubling dimension $D$, then the coreset $T$ 
obtained with the above construction has size 
$|T| = O(k(k+z) (4 \beta/\epsilon')^D)$.
\end{theorem}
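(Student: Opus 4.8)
The plan is to bound $|T|$ by the product of two quantities: the number $\tau$ of clusters (equivalently $|T_\tau|$) and the maximum size of any of the extracted independent sets $Y_\ell$. Since each $Y_\ell$ is an independent set of the matroid $M$, we have $|Y_\ell|\le rank(M)=k$, hence
\[
|T|=\Bigl|\bigcup_{\ell=1}^{\tau} Y_\ell\Bigr|\le \sum_{\ell=1}^{\tau}|Y_\ell|\le \tau\,k .
\]
So the whole argument reduces to showing $\tau=O\bigl((k+z)(4\beta/\epsilon')^D\bigr)$, and the heart of the matter is a standard packing-versus-covering estimate driven by the doubling dimension.

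First I would recall the two defining properties of $T_\tau$: its points are pairwise at distance strictly greater than $(\epsilon'/(2\beta))r_{T_{k+z}}$, i.e.\ $T_\tau$ is a packing at scale $(\epsilon'/(2\beta))r_{T_{k+z}}$. Next I would invoke Proposition~\ref{prop:dublingdim} with the $\beta$-approximate $(k+z)$-center solution $T_{k+z}$ (a set of size $k+z$ whose covering radius on $V$ is exactly $r_{T_{k+z}}$) and with contraction parameter $\epsilon'/(4\beta)\in(0,1)$. This yields a set $S'\subseteq V$ of size at most $(k+z)(4\beta/\epsilon')^{D}$ with $\max_{i\in V} d(i,S')\le (\epsilon'/(4\beta))\,r_{T_{k+z}}$; in other words, the balls of radius $(\epsilon'/(4\beta))r_{T_{k+z}}$ centered at the points of $S'$ cover all of $V$, and in particular cover $T_\tau$.

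Now I would apply the pigeonhole principle: if two distinct points of $T_\tau$ fell inside the same ball of radius $(\epsilon'/(4\beta))r_{T_{k+z}}$, the triangle inequality would put them at distance at most $2\cdot(\epsilon'/(4\beta))r_{T_{k+z}}=(\epsilon'/(2\beta))r_{T_{k+z}}$, contradicting the packing property of $T_\tau$ (here the factor-of-two matching between the packing scale and twice the covering scale is the only point that needs a moment's care, together with the strict inequality in the packing condition, which makes the contradiction clean). Therefore each of the at most $(k+z)(4\beta/\epsilon')^{D}$ balls contains at most one point of $T_\tau$, giving $\tau\le (k+z)(4\beta/\epsilon')^{D}$. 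Combining with $|T|\le \tau k$ yields $|T|=O\bigl(k(k+z)(4\beta/\epsilon')^{D}\bigr)$, as claimed. There is no genuine obstacle here; the proof is essentially a bookkeeping exercise once Proposition~\ref{prop:dublingdim} is in hand, the only mildly delicate step being the scale-matching in the pigeonhole argument described above.
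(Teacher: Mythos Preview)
Your proposal is correct and follows essentially the same approach as the paper: bound $|T|\le k\tau$ via the matroid rank, then apply Proposition~\ref{prop:dublingdim} with contraction parameter $\epsilon'/(4\beta)$ to cover $V$ with at most $(k+z)(4\beta/\epsilon')^D$ balls of radius $(\epsilon'/(4\beta))r_{T_{k+z}}$, and conclude via the packing property of $T_\tau$ that each ball contains at most one point of $T_\tau$. The paper compresses the pigeonhole step into the phrase ``only one point from each such ball can be added to $T_\tau$,'' whereas you spell out the triangle-inequality contradiction explicitly, but the argument is identical.
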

\begin{proof}
Observe that $|T| \leq k \tau$, hence we need to bound $\tau$. 
Consider the first set $T_{k+z}$ of $k+z$ centers computed
by the coreset construction algorithm. 
Proposition~\ref{prop:dublingdim} implies that there exists
a set $T'$ of at most $h=(k+z)(4\beta/\epsilon')^D$ points such that
$\max_{i \in V} d(i,T') \leq (\epsilon'/(4\beta))r_{T_{k+z}}$, hence
$V$ can be covered with $h$ balls of radius
at most $(\epsilon'/(4\beta))r_{T_{k+z}}$. It is easy to see that
in the adaptation of Hochbaum and Shmoys'
strategy \cite{HochbaumS85} described above to construct $T_{\tau}$,
only one point from each such ball can be added to $T_{\tau}$.
Hence, $\tau = |T_{\tau}| \leq h$, and the theorem follows. 
\end{proof}

\begin{theorem}\label{thm:mat_two}
Let $\epsilon \in (0,1)$ and $\alpha \geq 1$. 
Suppose that the coreset $T$ exhibits Properties P1 and P2 of
Lemma~\ref{lem:mat_one}, for $\epsilon'=\epsilon/(2\alpha+1)$.  Then, an
$\alpha$-approximate
solution $S$ to instance $(M_T=(T,I_T),z,\mathbf{m})$ of RMCM is a
$(\alpha+\epsilon)$-approximate solution to instance $(M=(V,I),z)$ of
RMC.
\end{theorem}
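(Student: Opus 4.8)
The plan is to combine the two properties P1 and P2 of Lemma~\ref{lem:mat_one} with the guarantee of the $\alpha$-approximation algorithm for RMCM, tuning the parameter $\epsilon'$ so that the accumulated additive error collapses to $\epsilon \cdot r^*(M,z)$. First I would recall the setup: we run an $\alpha$-approximate algorithm for RMCM on the instance $(M_T=(T,I_T),z,\mathbf{m})$, obtaining a feasible solution $S$ with cost $r_S \leq \alpha \cdot r^*(M_T,z,\mathbf{m})$, where $r^*(M_T,z,\mathbf{m})$ denotes the optimal cost of that RMCM instance. The two ingredients to chain together are: (i) P1 gives $r^*(M_T,z,\mathbf{m}) \leq (1+2\epsilon') r^*(M,z)$, and (ii) P2 says that $S$, viewed as a solution to the original RMC instance $(M=(V,I),z)$, has cost at most $r_S + \epsilon' r^*(M,z)$.

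The core computation is then the following chain of inequalities. The cost of $S$ as an RMC solution on $V$ is at most
\[
r_S + \epsilon' r^*(M,z) \leq \alpha \, r^*(M_T,z,\mathbf{m}) + \epsilon' r^*(M,z) \leq \alpha (1+2\epsilon') r^*(M,z) + \epsilon' r^*(M,z) = (\alpha + (2\alpha+1)\epsilon') r^*(M,z).
\]
Substituting $\epsilon' = \epsilon/(2\alpha+1)$ makes the bracketed quantity exactly $\alpha + \epsilon$, so $S$ is an $(\alpha+\epsilon)$-approximate solution to $(M=(V,I),z)$, as claimed. I should also note in passing that $S$ is genuinely feasible for the RMC instance: since $S \in I_T$ and $I_T$ is the restriction of $I$ to $T \subseteq V$, we have $S \in I$; this is already implicit in the proof of P2 but worth stating to make the feasibility explicit.

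One small thing to verify before writing the chain is that the domain constraint on $\epsilon'$ is respected: Lemma~\ref{lem:mat_one} and the whole coreset construction require $\epsilon' \in (0,1)$, and indeed $\epsilon' = \epsilon/(2\alpha+1) \in (0,1)$ whenever $\epsilon \in (0,1)$ and $\alpha \geq 1$, since $2\alpha+1 \geq 3 > 1 > \epsilon$. I would include this remark so the hypotheses of the invoked lemma are seen to hold.

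Honestly there is no real obstacle here — the statement is essentially a bookkeeping corollary of Lemma~\ref{lem:mat_one}. The only place requiring minor care is keeping straight which "$r^*$" is which: $r^*(M,z)$ is the optimum on $V$, whereas the $\alpha$-approximation guarantee is against the optimum of the RMCM instance on $T$, and it is P1 that bridges the two. If I wanted to be maximally careful I would also observe that the optimum of the RMCM instance is at most the cost of the particular solution $Y$ exhibited in the proof of P1, which is how the $(1+2\epsilon')$ factor enters; but since P1 is already stated as a property of the coreset, invoking it directly suffices.
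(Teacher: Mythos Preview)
Your proof is correct and follows essentially the same approach as the paper: apply P1 to bound the RMCM optimum by $(1+2\epsilon')r^*(M,z)$, multiply by $\alpha$ to bound $r_S$, then apply P2 to add the final $\epsilon' r^*(M,z)$ and substitute $\epsilon'=\epsilon/(2\alpha+1)$. Your additional remarks on feasibility of $S$ and on $\epsilon'\in(0,1)$ are sound but not present in the paper's (terser) version.
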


\begin{proof}
By Property P1 of Lemma~\ref{lem:mat_one}, we know that the optimal
solution to $(M_T=(T,I_T),z,\mathbf{m})$ has cost at most
$(1+2 \epsilon')r^*(M,z)=(1+2\epsilon/(2\alpha+1))r^*(M,z)$.
Hence, $S$ has cost $r_S \leq (\alpha+2\alpha\epsilon/(2\alpha+1))r^*(M,z)$.
By Property P2 of Lemma~\ref{lem:mat_one}, $S$ is also a solution 
to instance $(M=(V,I),z)$ of RMC with cost
$r_S+\epsilon' r^*(M,z) \leq (\alpha+\epsilon)r^*(M,z)$.
\end{proof}
The following corollary is an immediate consequence of
Theorems~\ref{thm:RMCcorsize}, \ref{thm:mat_two} and \ref{thm:approxmult}.
\begin{corollary} \label{cor:mat}
For any fixed $\epsilon \in (0,1)$, 
the coreset-based strategy for the RMC problem presented above
can be used to compute 
a $(3+\epsilon)$-approximate solution to any instance $(M=(V,I),z)$.
If $V$ has constant doubling dimension, the sequential running time is 
$O(|V| \mbox{\rm poly}(k,z))$.
\end{corollary}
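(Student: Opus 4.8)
The plan is simply to compose the three results cited in the statement. Fix $\epsilon\in(0,1)$, let $\alpha=3$ be the approximation ratio of the sequential RMCM algorithm of Theorem~\ref{thm:approxmult}, and set $\epsilon'=\epsilon/(2\alpha+1)=\epsilon/7\in(0,1)$. The algorithm is: (i) run the coreset construction of this section with design parameter $\epsilon'$, using Gonzalez's algorithm for the initial $(k+z)$-center step (so that $\beta=2$); (ii) form the RMCM instance $(M_T=(T,I_T),z,\mathbf{m})$ on the resulting coreset, with multiplicities $m_i=|\{j\in V:p(j)=i\}|$; (iii) return a $3$-approximate solution $S$ to this instance via Theorem~\ref{thm:approxmult}.

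For correctness I would argue as follows. By Lemma~\ref{lem:c1c2}, the coreset $T$ built in step (i) satisfies Conditions C1 and C2 of Lemma~\ref{lem:mat_one} for this $\epsilon'$, hence by Lemma~\ref{lem:mat_one} it enjoys Properties P1 and P2 for the same $\epsilon'=\epsilon/(2\alpha+1)$. Therefore Theorem~\ref{thm:mat_two}, applied with $\alpha=3$, immediately yields that the $3$-approximate RMCM solution $S$ computed on $T$ is a $(3+\epsilon)$-approximate solution to the original RMC instance $(M=(V,I),z)$.

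For the running time under constant doubling dimension $D$ I would add up the three phases. Gonzalez's algorithm costs $O((k+z)|V|)$; the Hochbaum--Shmoys-style linear scan producing $T_\tau$ costs $O(\tau|V|)$, and Theorem~\ref{thm:RMCcorsize} (with $\beta=2$ and $\epsilon'=\epsilon/7$) bounds $\tau=O((k+z)(56/\epsilon)^D)$ and $|T|=O(k(k+z)(56/\epsilon)^D)$, both $\mathrm{poly}(k,z)$ for fixed $\epsilon$ and constant $D$; thus phases (i)--(ii) cost $O(|V|\,\mathrm{poly}(k,z))$. Phase (iii) runs in time polynomial in the description size of the RMCM instance on $T$, i.e. $\mathrm{poly}(|T|,\log|V|)$ since each multiplicity is at most $|V|$, which is $\mathrm{poly}(k,z)\cdot\mathrm{polylog}(|V|)=O(|V|\,\mathrm{poly}(k,z))$. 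Summing the three contributions gives the claimed $O(|V|\,\mathrm{poly}(k,z))$ bound.

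Since the statement is a genuine corollary, there is no real obstacle: the proof is pure bookkeeping. The only points that need care are fixing $\beta=2$ in the sequential setting, the exact choice $\epsilon'=\epsilon/7$ dictated by Theorem~\ref{thm:mat_two} with $\alpha=3$, and checking that the $\mathrm{polylog}(|V|)$ factor incurred in phase (iii) for manipulating the multiplicities is absorbed into the linear-in-$|V|$ factor.
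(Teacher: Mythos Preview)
Your proposal is correct and matches the paper's approach exactly: the paper states the corollary as an immediate consequence of Theorems~\ref{thm:RMCcorsize}, \ref{thm:mat_two} and \ref{thm:approxmult}, and you have simply (and correctly) spelled out the bookkeeping, including the choice $\epsilon'=\epsilon/7$ forced by $\alpha=3$ and the use of $\beta=2$ via Gonzalez in the sequential setting. The only minor omission is the cost of extracting the maximal independent sets $Y_\ell$ from each cluster, but this is clearly $O(|V|\,\mathrm{poly}(k))$ under the independence-oracle assumption and hence absorbed into your bound.
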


\section{Coreset-based strategy for the RKC problem}\label{sec:knap}

In this section we present a coreset-based strategy for the RKC
problem which is similar in spirit to the one presented in the
previous section for the RMC problem.  Consider an instance
$(V,z,\mathbf{w})$ of RKC, and let $r^*(V,z,\mathbf{w})$ denote the
cost of an optimal solution.
The idea is to extract a coreset $T$ from
a $\tau$-clustering of $V$ by picking one point per cluster so that
$T$ contains a good solution $S$ for $(V,z,\mathbf{w})$, and then to
run an approximation algorithm for the RKCM problem on $T$, using, for
each $i \in T$, the size of its cluster as multiplicity $m_i$.  The
cost penalty introduced by seeking the solution on $T$ rather than on
the entire set $V$ will be limited by ensuring that for each $i \in
V$, the distance $d(i,T)$ is sufficiently small.  The main difficulty
with the above strategy is the choice of a suitable clustering
granularity $\tau$, hence we resort to testing geometrically
increasing guesses for $\tau$. Observe that in this fashion we
generate a sequence of coresets, thus a sequence of RKCM instances
upon which the approximation algorithm has to be run.  A challenge of
this approach is to devise a suitable stopping condition for detecting
a good guess.

More specifically, our coreset-based strategy, dubbed {\sc
  RKnapCenter}, works as follows.  Let ${\cal A}_{\rm RKCM}$ be an $\alpha$-approximation
algorithm for the RKCM problem, and let $\epsilon \in (0,1)$ be a
fixed accuracy parameter. For each value $\tau$ in a geometric
progression, we run a procedure dubbed {\sc CoresetComputeAndTest}, which
first computes a partition of $V$ into $\tau$ clusters $C_1, C_2,
\ldots, C_{\tau}$, induced by a solution to $\tau$-center on $V$, sets
coreset $T$ to contain one point of minimum weight from each cluster,
and finally runs ${\cal A}_{\rm RKCM}$ on the RKCM instance $(T,z,\mathbf{w}_T,\mathbf{m})$,
where  $\mathbf{w}_T$ is the restriction of $\mathbf{w}$ to $T$, and 
$\mathbf{m} = \{ m_i: i\in T\}$, with $m_i$ being the size of the cluster that $i$ belongs to. {\sc CoresetComputeAndTest} returns $S$,  the
solution computed by ${\cal A}_{\rm RKCM}$, $r_1 = \max_{j \in V} d(j,T)$ and
$r_2 = \min_{X \subseteq T: \sum_{i \in X} m_i \geq \mu_T-z} \max_{i\in X} d(i,S)$,
where $\mu_T = \sum_{i \in T} m_i = |V|$. 
If $\alpha(4\alpha+2)r_1 \leq \epsilon (r_2-4\alpha r_1)$,
then the algorithm terminates and returns $S$ as final
solution. (See Algorithm~\ref{alg:knap_one} for the pseudocode.)

\begin{lemma}\label{lemma:knap_one}
\sloppy
For any  $\tau \geq 1$, consider the triplet $(S, r_1, r_2)$
returned by one execution of {\sc CoresetComputeAndTest}$(\tau)$
within  {\sc RKnapCenter$(\epsilon, \alpha)$}. Then:
\begin{enumerate}
\item \label{knap_one:enum1}
$S$
is a solution to the RKC instance $(V,z,\mathbf{w})$
of cost at most $2 r_1+ r_2$.
\item \label{knap_one:enum2}
$r_2 \leq \alpha(r^*(V,z,\mathbf{w})+4r_1)$.
\end{enumerate}
\end{lemma}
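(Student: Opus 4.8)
The plan is to handle the two items separately, in both cases moving between the ground instance $(V,z,\mathbf{w})$ and the coreset RKCM instance $(T,z,\mathbf{w}_T,\mathbf{m})$ through the proxy function $p:V\to T$ that {\sc CoresetComputeAndTest} induces: $p(j)$ is the chosen (minimum-weight) representative of the cluster $C_\ell$ of the $\tau$-center partition that contains $j$, so $m_i=|\{j\in V:p(j)=i\}|$ for $i\in T$ and $\mu_T=\sum_{i\in T}m_i=|V|$. I will use two properties that follow directly from the construction: (i) $d(j,p(j))\le 2r_1$ for every $j\in V$ (each point is within $2r_1$ of the representative of its cluster), and (ii) $w_{p(j)}\le w_j$ for every $j\in V$ (the representative of a cluster is one of its lightest points).

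For item~\ref{knap_one:enum1} I would first check feasibility of $S$ for $(V,z,\mathbf{w})$: $S\subseteq T\subseteq V$, and $\sum_{j\in S}w_j\le 1$ since $S$ is a feasible RKCM solution for the restriction $\mathbf{w}_T$ of $\mathbf{w}$ to $T$. For the cost, I would unfold the definition of $r_2$: there is $X\subseteq T$ with $\sum_{i\in X}m_i\ge\mu_T-z$ and $\max_{i\in X}d(i,S)\le r_2$. Setting $V_X=\{j\in V:p(j)\in X\}$, the sets $\{j:p(j)=i\}$ are pairwise disjoint, so $|V_X|=\sum_{i\in X}m_i\ge\mu_T-z=|V|-z$, and for each $j\in V_X$ the triangle inequality gives $d(j,S)\le d(j,p(j))+d(p(j),S)\le 2r_1+r_2$. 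Since $|V_X|\ge|V|-z$, the set $V_X$ is admissible in the minimization defining $r(S,V,z)$, hence $r(S,V,z)\le\max_{j\in V_X}d(j,S)\le 2r_1+r_2$.

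For item~\ref{knap_one:enum2} I would exhibit a feasible solution of $(T,z,\mathbf{w}_T,\mathbf{m})$ of RKCM cost at most $r^*(V,z,\mathbf{w})+4r_1$; since $S$ is $\alpha$-approximate for that instance and $r_2$ is precisely the RKCM cost of $S$, this gives $r_2\le\alpha(r^*(V,z,\mathbf{w})+4r_1)$. Write $r^*$ for $r^*(V,z,\mathbf{w})$, let $S^*$ be an optimal RKC solution for $(V,z,\mathbf{w})$, and let $Q\subseteq V$ with $|Q|\ge|V|-z$ satisfy $d(j,S^*)\le r^*$ for all $j\in Q$. Take $S'=\{p(o):o\in S^*\}\subseteq T$. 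Knapsack feasibility of $S'$ is exactly where property (ii) pays off: $\sum_{t\in S'}w_t\le\sum_{o\in S^*}w_{p(o)}\le\sum_{o\in S^*}w_o\le 1$. To bound the RKCM cost of $S'$, put $X'=\{p(j):j\in Q\}$; then $\{j\in V:p(j)\in X'\}\supseteq Q$, so $\sum_{i\in X'}m_i\ge|Q|\ge|V|-z=\mu_T-z$, and for each $i\in X'$, choosing $j\in Q$ with $p(j)=i$ and $o\in S^*$ with $d(j,o)\le r^*$, two triangle inequalities give $d(i,S')\le d(i,p(o))\le d(p(j),j)+d(j,o)+d(o,p(o))\le 2r_1+r^*+2r_1$. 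Hence $X'$ certifies that $S'$ has RKCM cost at most $r^*+4r_1$, which completes the argument.

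The triangle-inequality bookkeeping is routine; I expect the only delicate points to lie in item~\ref{knap_one:enum2}: first, preserving the knapsack budget when passing from $S^*$ to its proxies, which is precisely why {\sc CoresetComputeAndTest} retains a lightest point per cluster; and second, ensuring that the non-outliers of $S^*$ account for at least $\mu_T-z$ units of multiplicity in the coreset instance. In both places the maps $o\mapsto p(o)$ and $j\mapsto p(j)$ may fail to be injective, so all weight and multiplicity comparisons must be written as inequalities in the correct direction (e.g.\ $\sum_{t\in S'}w_t\le\sum_{o\in S^*}w_{p(o)}$ rather than an equality).
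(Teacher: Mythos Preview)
Your proposal is correct and follows essentially the same approach as the paper: for Point~\ref{knap_one:enum1} both you and the paper lift the witness set $X\subseteq T$ for $r_2$ back to $V$ via the cluster/proxy correspondence and apply the $2r_1$ intra-cluster bound; for Point~\ref{knap_one:enum2} both construct the candidate solution $\{p(o):o\in S^*\}$ (the paper writes it as $\{i\in T:\exists\,\ell,\ i\in C_\ell\wedge C_\ell\cap S^*_V\neq\emptyset\}$), verify the knapsack budget via the minimum-weight-representative property, and certify cost $\le r^*+4r_1$ using the proxies of the non-outliers as the witness set. Your explicit remark that $o\mapsto p(o)$ and $j\mapsto p(j)$ need not be injective, and the corresponding care with inequality directions, matches the paper's implicit handling of this point.
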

\begin{proof}
Let us prove Point~\ref{knap_one:enum1} first.
Since $S$ is a feasible solution to 
$(T,z,\mathbf{w}_T,\mathbf{m})$ and $\mathbf{w}_T$ is the restriction of $\mathbf{w}_T$ 
of the points of $T$,
$S$ is also feasible for $(V,z,\mathbf{w})$. By 
definition of $r_2$, there exists a subset $X \subseteq T$ such that 
$\sum_{i \in X} m_i \geq |V|-z$ and $\max_{i \in X}d(i,S) \leq r_{2} $. 
Consider a point $i \in X$ and suppose that $i \in C_{\ell}$. Then, by the triangle inequality, $\forall j \in C_{\ell}$,
$d(j,S)\leq d(j,i)+d(i,S)\leq 2r_1+ r_2$. Thus, the points in $V$ 
at distance at most $2r_1+ r_2$ from $S$ are
at least $\sum_{\ell : C_{\ell} \cap X \neq \emptyset}|C_{\ell}| = \sum_{i \in X}
m_i \geq |V|-z$.

As for Point~\ref{knap_one:enum2}, let $S^*_V \subseteq V$ be an
optimal solution to $(V,z,\mathbf{w})$ and let $X = \{i \in T: (i \in
C_{\ell}) \wedge (C_{\ell} \cap S^*_V \neq \emptyset)\}$.  We now
show that $X$ is a feasible solution to $(T,z,\mathbf{w}_T,\mathbf{m})$ of cost
at most $r^*(V,z,\mathbf{w})+4r_1$, hence $S$ must have a cost of at
most $\alpha(r^*(V,z,\mathbf{w})+4r_1)$. Observe that since
$T$ contains the points of minimum weight from each cluster,
$\sum_{i \in X} w_i \leq \sum_{j \in S^*_V } w_j \leq 1$. 
Consider a point $j \in V$ such that $d(j,S^*_V) \leq r^*(V,z,\mathbf{w})$.
Clearly there are at least $|V|-z$ such points (e.g., all nonoutliers).
Let $L = \{\ell : \exists j \in C_{\ell} \mbox{ with } 
d(j,S^*_V) \leq r^*(V,z,\mathbf{w})\}$. Hence, $\sum_{i \in T\cap C_{\ell}: 
 \ell \in L} m_i \geq |V|-z$. Consider 
a cluster $C_{\ell}$ with $\ell \in L$ and the
point $i \in T \cap C_{\ell}$. 
Since $\ell \in L$, $C_{\ell}$ contains
a point $j$ with $d(j,S^*_V) \leq r^*(V,z,\mathbf{w})$.
Let $o$ be the point of $S^*_V$ closest to $j$ and suppose that
$o$ belongs to cluster $C_{\ell'}$. Letting $i'$ be the point in
$X \cap C_{\ell'}$, by the triangle inequality 
we have $d(i,X) \leq d(i,i') \leq d(i,j)+d(j,o)+d(o,i') \leq
r^*(V,z,\mathbf{w})+4r_1$.
This immediately implies that 
$\sum_{i \in T : d(i,X) \leq r^*(V,z,\mathbf{w})+4r_1} m_i \geq |V|-z$.
\end{proof}

\begin{algorithm}[t]
\SetAlgoLined
 $ \tau \leftarrow 1$ \\
 \Do{$\frac{\alpha(4\alpha +2) r_1}{r_2- 4\alpha r_1} > \epsilon$}{
  $(S, r_1, r_2) \leftarrow \mbox{\sc CoresetComputeAndTest}(\tau)$ \\
  $ \tau \leftarrow 2 \tau $ \\ 
 }
 \Return{$S$ }
 
\vspace*{0.3cm}
{\bf Procedure} {\sc CoresetComputeAndTest}$(\tau)$\;
 $\{C_1, C_2, \ldots, C_{\tau}\} \leftarrow \tau$-clustering  
 induced by a solution $\Gamma$ to $\tau$-center on $V$ \\
 $r_1 \leftarrow \max_{j \in V} d(j,\Gamma)$ \\
 $T \leftarrow \emptyset$ \\
 \ForEach{cluster $C_{\ell}$}{
  $i \leftarrow arg\min_{j\in C_{\ell}} w_{j}$ \\
  $m_i \leftarrow |C_{\ell}| $ \\
  $ T \leftarrow T \cup \{i\} $ \\
 }
 $S \leftarrow {\cal A}_{\rm RKCM}(T,z,\mathbf{w}_T=\{w_i : i \in T\},
    \mathbf{m}=\{m_i : i \in T\} )$ \\
 $r_2 \leftarrow \min_{X \subseteq T: \sum_{i \in X} m_i \geq t}
\max_{i \in X} d(i,S)$  \\
 \Return{$(S, r_{1}, r_{2})$ }
\caption{\sc RKnapCenter($\epsilon, \alpha$)}\label{alg:knap_one}
\end{algorithm}

The following two theorems bound, respectively, the
maximum value of $\tau$ set by the do-while
loop in {\sc RKnapCenter} (hence, the size of
the coreset from which the final solution is extracted), and
the approximation ratio featured by the algorithm.

\begin{theorem} \label{thm:RKCcorsize}
Assume that a $\beta$-approximation algorithm for $\tau$-center is
used in Line 8 of {\sc RKnapCenter$(\epsilon,\alpha)$} and let
$\tau_f$ be the value of $\tau$ at which the algorithm stops.  If $V$
has doubling dimension $D$, then $\tau_f = O((k+z) (c/\epsilon)^D)$,
where $k$ is the minimum cardinality of an optimal solution to the RKC
instance $(V,z,\mathbf{w})$ and $c=\beta(4\alpha+2)(\alpha+\epsilon)$.
\end{theorem}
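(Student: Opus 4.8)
The plan is to show that the do--while loop in {\sc RKnapCenter} is forced to halt as soon as the clustering radius $r_1$ drops below a fixed fraction of the optimal RKC cost $r^*(V,z,\mathbf{w})$, and then to use the doubling dimension of $V$ to control how large $\tau$ must be to drive $r_1$ below that threshold. Throughout I would assume $r^*(V,z,\mathbf{w})>0$, the degenerate case being handled separately.

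First I would reduce the stopping test to a condition on $r_1$ alone. By Point~\ref{knap_one:enum1} of Lemma~\ref{lemma:knap_one}, the set $S$ produced by {\sc CoresetComputeAndTest}$(\tau)$ is feasible for $(V,z,\mathbf{w})$ with cost at most $2r_1+r_2$, so $r^*(V,z,\mathbf{w}) \le 2r_1+r_2$, i.e.
\[
r_2 - 4\alpha r_1 \;\ge\; r^*(V,z,\mathbf{w}) - (4\alpha+2)r_1 .
\]
Hence, whenever $r_1 \le \dfrac{\epsilon\, r^*(V,z,\mathbf{w})}{(4\alpha+2)(\alpha+\epsilon)}$, the right-hand side above is at least $\tfrac{\alpha}{\alpha+\epsilon}\,r^*(V,z,\mathbf{w})>0$, so the ratio tested by the loop is well defined and positive, and a direct rearrangement of the same inequality on $r_1$ gives $\alpha(4\alpha+2)r_1 \le \epsilon\bigl(r^*(V,z,\mathbf{w})-(4\alpha+2)r_1\bigr) \le \epsilon(r_2-4\alpha r_1)$, i.e. the loop stops. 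Thus it suffices to exhibit a value $\tau^\star = O((k+z)(c/\epsilon)^D)$ at which $r_1 \le \dfrac{\epsilon\, r^*(V,z,\mathbf{w})}{(4\alpha+2)(\alpha+\epsilon)}$; since $\tau$ is merely doubled between consecutive iterations, $\tau_f$ is then at most $2\tau^\star$.

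Next I would bound $r_1$ as a function of $\tau$ via the doubling dimension. Since the $\tau$-clustering in Line~8 comes from a $\beta$-approximate $\tau$-center algorithm, $r_1 \le \beta\,\rho^*(V,\tau)$, where $\rho^*(V,\tau)$ is the optimal $\tau$-center radius on $V$, a nonincreasing function of $\tau$. Moreover, a minimum-cardinality optimal solution to the RKC instance (which has $k$ elements, by definition of $k$) together with its $z$ outliers is a feasible $(k+z)$-center solution of radius at most $r^*(V,z,\mathbf{w})$, whence $\rho^*(V,k+z) \le r^*(V,z,\mathbf{w})$. Applying Proposition~\ref{prop:dublingdim} to an optimal $(k+z)$-center solution with contraction factor $\epsilon/c$ --- which lies in $(0,1)$ because $c=\beta(4\alpha+2)(\alpha+\epsilon)>6$ and $\epsilon<1$ --- yields a set of size at most $(k+z)(c/\epsilon)^D =: \tau^\star$ and radius at most $(\epsilon/c)\,\rho^*(V,k+z) \le (\epsilon/c)\,r^*(V,z,\mathbf{w})$. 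Consequently, for every $\tau \ge \tau^\star$,
\[
r_1 \;\le\; \beta\,\rho^*(V,\tau) \;\le\; \beta\,\rho^*(V,\tau^\star) \;\le\; \frac{\beta\epsilon}{c}\,r^*(V,z,\mathbf{w}) \;=\; \frac{\epsilon\, r^*(V,z,\mathbf{w})}{(4\alpha+2)(\alpha+\epsilon)},
\]
exactly the threshold from the first step. Combining the two parts, the loop must stop no later than when $\tau$ reaches $\tau^\star$, so $\tau_f \le 2\tau^\star = O((k+z)(c/\epsilon)^D)$.

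I expect the only delicate point to be the handling of the subtractive quantity $r_2 - 4\alpha r_1$: before manipulating the stopping ratio one must certify that this denominator is strictly positive (so the ratio is meaningful and the rearrangement valid), and this is precisely what the lower bound $r_2 \ge r^*(V,z,\mathbf{w}) - 2r_1$ from Lemma~\ref{lemma:knap_one}(\ref{knap_one:enum1}) provides once $r_1$ is small enough. Everything else is a routine chain of substitutions together with the monotonicity of $\rho^*(V,\cdot)$ in $\tau$.
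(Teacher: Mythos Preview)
Your proposal is correct and follows essentially the same route as the paper: both combine Lemma~\ref{lemma:knap_one}(\ref{knap_one:enum1}) (yielding $r_2 \ge r^*(V,z,\mathbf{w}) - 2r_1$), the bound $\rho^*(V,k+z) \le r^*(V,z,\mathbf{w})$ from the ``optimal centers plus outliers'' argument, Proposition~\ref{prop:dublingdim} with contraction $\epsilon/c$, and the $\beta$-approximation bound $r_1 \le \beta\rho^*(V,\tau)$ to force the stopping condition once $\tau \ge (k+z)(c/\epsilon)^D$. Your two-step organization (first isolating a threshold on $r_1$, then meeting it via the doubling-dimension bound) is slightly cleaner than the paper's single chain, and your explicit check that $r_2 - 4\alpha r_1 > 0$ is a point the paper leaves implicit.
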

\begin{proof}
Let $S^*_V$ be the optimal solution to the RKC instance
$(V,z,\mathbf{w})$ of minimum cardinality $k$, of cost
$r^*(V,z,\mathbf{w})$.  By reasoning as in the proof
of Lemma~\ref{lem:c1c2}, we conclude that the points of $S^*_V$
together with the at most $z$ outliers form a solution to
$(k+z)$-center on $V$ of cost at most $r^*(V,z,\mathbf{w})$. Hence,
letting $\rho^*(V, k + z)$ denote the cost of an optimal solution to
$(k + z)$-center on $V$, we have $\rho^*(V, k + z) \leq
r^*(V,z,\mathbf{w})$. Proposition~\ref{prop:dublingdim} implies that
for every $\tau \geq (c/\epsilon)^D(k+z)$, the cost of the optimal
solution to $\tau$-center on $V$ is $\rho^*(V, \tau) \leq
(\epsilon/c)\rho^*(V, k + z)$.  Let $\tau_f$ be the smallest value of
$\tau$ tested by the algorithm such that $\tau_f \geq
(c/\epsilon)^D(k+z)$ (hence, $\tau_f \leq 2(c/\epsilon)^D(k+z)$) and
let $(S,r_1,r_2)$ be the triplet returned by {\sc
  CoresetComputeAndTest} $(\tau_f)$. Observe that $r_1 \leq \beta
\rho^*(V, \tau_f)$. We now show that $r_1$ and $r_2$ satisfy the
stopping condition, thus proving the theorem. By
Point~\ref{knap_one:enum1} of Lemma \ref{lemma:knap_one}, $S$ is a
feasible solution to the RKC instance $(V, z, \mathbf{w})$ of cost at
most $2r_1 + r_2$, hence, combining this fact with the previous
observations, we have $2r_1 + r_2 \geq r^*(V,z,\mathbf{w}) \geq
\rho^*(V, k + z)$. This implies that $r_1 \leq \beta \rho^*(V, \tau_f)
\leq \beta (\epsilon/c)\rho^*(V, k + z) \leq \beta (\epsilon/c)
(2r_1+r_2)$.  By substituting $c=\beta(4\alpha+2)(\alpha+\epsilon)$
and applying trivial algebra, we obtain $\alpha(4\alpha+2)r_1 \leq
\epsilon(r_2-4\alpha r_1)$, which proves that the stopping condition
is met.
\end{proof} 

\begin{theorem}\label{thm:knap_two} 
Let $\epsilon \in (0,1)$ and let $\alpha$ be the approximation factor
of the sequential algorithm ${\cal A}_{\rm RKCM}$ for RKCM used in Line 16 
of {\sc RKnapCenter}$(\epsilon, \alpha)$. Then, the algorithm
returns an $(\alpha+\epsilon)$-approximate solution $S$ to the RKC
instance $(V,z,\mathbf{w})$.
\end{theorem}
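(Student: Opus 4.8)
The plan is to combine the two properties of Lemma~\ref{lemma:knap_one} with the stopping condition of the do-while loop, having first used Theorem~\ref{thm:RKCcorsize} to guarantee that the loop terminates. Write $r^* = r^*(V,z,\mathbf{w})$ for brevity. By Theorem~\ref{thm:RKCcorsize}, the loop in {\sc RKnapCenter}$(\epsilon,\alpha)$ exits after finitely many iterations, at some value $\tau_f$ of $\tau$; let $(S,r_1,r_2)$ be the triplet returned by the corresponding call to {\sc CoresetComputeAndTest}$(\tau_f)$. Exiting the loop means that $\alpha(4\alpha+2)r_1 \le \epsilon(r_2 - 4\alpha r_1)$; since the left-hand side is nonnegative, this also forces $r_2 - 4\alpha r_1 \ge 0$, so the stopping test is well defined at exit. (The sub-case $r_1 = 0$ needs no further work: by Point~\ref{knap_one:enum2} of Lemma~\ref{lemma:knap_one} the returned $S$ then has cost at most $r_2 \le \alpha r^* \le (\alpha+\epsilon) r^*$.)

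Next I would plug Point~\ref{knap_one:enum2} of Lemma~\ref{lemma:knap_one}, which can be rewritten as $r_2 - 4\alpha r_1 \le \alpha r^*$, into the stopping condition to obtain $\alpha(4\alpha+2)r_1 \le \epsilon(r_2 - 4\alpha r_1) \le \epsilon\,\alpha r^*$, and hence, cancelling $\alpha$, the key bound $(4\alpha+2)r_1 \le \epsilon r^*$ on the ``coreset error'' $r_1$. This step is the whole point of the construction: the coefficient $\alpha(4\alpha+2)$ baked into the stopping test is exactly what makes the error $r_1$ shrink to the slack $\epsilon r^*/(4\alpha+2)$ that the final solution can afford to lose.

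Finally I would apply Point~\ref{knap_one:enum1} of Lemma~\ref{lemma:knap_one}: $S$ is a feasible solution to $(V,z,\mathbf{w})$ of cost at most $2r_1 + r_2$. Bounding $r_2$ via Point~\ref{knap_one:enum2} once more, this cost is at most $2r_1 + \alpha r^* + 4\alpha r_1 = \alpha r^* + (4\alpha+2)r_1 \le \alpha r^* + \epsilon r^* = (\alpha+\epsilon) r^*$, establishing the claimed $(\alpha+\epsilon)$-approximation.

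I do not anticipate any serious obstacle: given Lemma~\ref{lemma:knap_one} and Theorem~\ref{thm:RKCcorsize}, the argument is a short chain of inequalities. The only points demanding a little care are (i) confirming, from $r_1 \ge 0$ and $\epsilon > 0$, that $r_2 - 4\alpha r_1 > 0$ whenever $r_1 > 0$, so that the loop-exit inequality is legitimately used in the linear form above, and (ii) bookkeeping the constants so that the penalty terms $2r_1$ (from Point~\ref{knap_one:enum1}) and $4\alpha r_1$ (from Point~\ref{knap_one:enum2}) add up to precisely $(4\alpha+2)r_1$, which is then absorbed into $\epsilon r^*$.
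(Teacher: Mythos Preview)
Your proposal is correct and follows essentially the same approach as the paper: combine the stopping condition $\alpha(4\alpha+2)r_1 \le \epsilon(r_2-4\alpha r_1)$ with the two points of Lemma~\ref{lemma:knap_one} to bound $2r_1+r_2$ by $(\alpha+\epsilon)r^*$. The only cosmetic difference is that the paper factors the cost as $(r_2-4\alpha r_1)+(4\alpha+2)r_1$ and bounds each piece, whereas you first isolate $(4\alpha+2)r_1\le\epsilon r^*$ and then add it to $r_2\le\alpha r^*+4\alpha r_1$; the algebra is the same, and your explicit invocation of Theorem~\ref{thm:RKCcorsize} for termination is a welcome addition that the paper leaves implicit.
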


\begin{proof}
When the
algorithm terminates returning a solution $S$, it holds that $\alpha (4\alpha+2)
r_1 \leq \epsilon (r_2-4\alpha r_1)$.  By Point~\ref{knap_one:enum1}
of Lemma \ref{lemma:knap_one}, $S$ is a solution to the RKC instance
$(V, z, \mathbf{w})$ of cost at most $2 r_{1} + r_2= r_{2}- 4\alpha
r_{1} + (4\alpha+2) r_1 \leq (r_2 - 4\alpha r_1) + (\epsilon/\alpha)
(r_2 - 4\alpha r_1)$. By Point~\ref{knap_one:enum2} of Lemma
\ref{lemma:knap_one}, $r_{2} - 4\alpha r_{1} \leq \alpha
r^*(V,z,\mathbf{w}) $. Hence, the solution returned has cost $2 r_{1}+
r_{2} \leq \alpha r^*(V,z,\mathbf{w}) + (\epsilon/\alpha)\alpha
r^*(V,z,\mathbf{w}) \leq (\alpha+\epsilon) r^*(V,z,\mathbf{w}) $.
\end{proof}
The following corollary is an immediate consequence of
Theorems~\ref{thm:RKCcorsize}, \ref{thm:knap_two} and \ref{thm:approxmult}.
\begin{corollary}\label{cor:knap}
For any fixed $\epsilon \in (0,1)$, 
the coreset-based strategy for the RKC problem presented above
can be used to compute 
a $(3+\epsilon)$-approximate solution to any instance $(M=(V,I),z)$.
If $V$ has constant doubling dimension, the  sequential running time is 
$O(|V| \mbox{\rm poly}(k,z))$.
\end{corollary}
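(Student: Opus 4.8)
The plan is to instantiate the generic guarantees of Theorems~\ref{thm:RKCcorsize} and~\ref{thm:knap_two} with concrete choices of the two subroutines used inside {\sc RKnapCenter}, and then to add up the running time of the outer do-while loop. For the approximation ratio I would let ${\cal A}_{\rm RKCM}$ be the polynomial-time $3$-approximation algorithm for RKCM furnished by Theorem~\ref{thm:approxmult}, that is, set $\alpha=3$. Since $\epsilon\in(0,1)$, Theorem~\ref{thm:knap_two} applies directly and gives that {\sc RKnapCenter}$(\epsilon,3)$ returns a feasible solution $S$ to $(V,z,\mathbf{w})$ of cost at most $(\alpha+\epsilon)\,r^*(V,z,\mathbf{w})=(3+\epsilon)\,r^*(V,z,\mathbf{w})$, which is exactly the claimed bound.

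For the running time I would restrict to the sequential setting and use Gonzalez's algorithm as the $\beta$-approximation for $\tau$-center in Line~8, so $\beta=2$. Plugging $\alpha=3$, $\beta=2$ into Theorem~\ref{thm:RKCcorsize}, the relevant constant is $c=\beta(4\alpha+2)(\alpha+\epsilon)=28(3+\epsilon)<112$, so when the doubling dimension $D$ is constant and $\epsilon$ is fixed the algorithm stops at a guess $\tau_f=O\bigl((k+z)(c/\epsilon)^D\bigr)=O(k+z)$; in particular the do-while loop executes {\sc CoresetComputeAndTest} only $O(\log\tau_f)=O(\log(k+z))$ times.

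It then remains to bound the cost of one call {\sc CoresetComputeAndTest}$(\tau)$ for $\tau\le\tau_f$. Running Gonzalez to build the $\tau$-clustering and $r_1$ takes $O(\tau|V|)$ time; extracting the minimum-weight point of each cluster, forming $T$, and setting the multiplicities takes $O(|V|)$ time; running ${\cal A}_{\rm RKCM}$ on the instance $(T,z,\mathbf{w}_T,\mathbf{m})$ takes time polynomial in the size of that instance, which, since $|T|\le\tau\le\tau_f=O(k+z)$ and $z$ and the multiplicities are stored succinctly, is $\mathrm{poly}(k,z)$ up to polylogarithmic factors in $|V|$; finally $r_2$ is computed by sorting the points of $T$ by their distance to $S$ and greedily accumulating multiplicities, in $O(\tau\log\tau)$ time. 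Adding the $\tau$-center contributions over the geometric progression $\tau=1,2,4,\dots,\tau_f$ gives a total of $O(\tau_f|V|)=O(|V|(k+z))$ (the sum being within a constant factor of its last term), while the $O(\log(k+z))$ invocations of ${\cal A}_{\rm RKCM}$ contribute $\mathrm{poly}(k,z)\cdot\mathrm{polylog}(|V|)=O(|V|\,\mathrm{poly}(k,z))$; summing up yields the stated sequential running time $O(|V|\,\mathrm{poly}(k,z))$.

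The proof is essentially an assembly of results already in hand, so the only point requiring care is the running-time accounting: one must check that the RKCM instances handed to ${\cal A}_{\rm RKCM}$ never grow beyond size $\mathrm{poly}(k,z)$ (so that the polynomial-time RKCM solver cannot introduce a super-polynomial-in-$(k,z)$ or more-than-linear-in-$|V|$ cost), and that the geometrically increasing sequence of $\tau$-center computations is dominated by its last term. Everything else follows immediately by combining Theorems~\ref{thm:RKCcorsize}, \ref{thm:knap_two}, and~\ref{thm:approxmult}.
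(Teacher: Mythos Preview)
Your argument is correct and follows exactly the route the paper intends: the corollary is obtained by plugging $\alpha=3$ (from Theorem~\ref{thm:approxmult}) and $\beta=2$ (Gonzalez) into Theorems~\ref{thm:RKCcorsize} and~\ref{thm:knap_two}. The paper itself gives no explicit running-time accounting and simply declares the corollary an immediate consequence of those three theorems; your additional bookkeeping (bounding $\tau_f$, summing the geometric progression of $\tau$-center costs, and observing that the RKCM instances passed to ${\cal A}_{\rm RKCM}$ have size $\mathrm{poly}(k,z)\cdot\mathrm{polylog}(|V|)$) is a faithful elaboration of what the authors leave implicit.
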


\noindent
{\bf Remarks on the results of Sections~\ref{sec:mat} and
  \ref{sec:knap}.}  While the analysis of our algorithms is performed
in terms of the doubling dimension $D$ of $V$, the algorithms themselves are
oblivious to the value $D$. Also, it is immediate to observe that for
fixed values of $k$, $z$ and $D$, exhaustive search on the coresets
yields $(1+\epsilon)$-approximate solutions to RMC and RKC with work
merely linear in $V$.

\section{Big Data implementations}\label{sec:implementation}
In this section, we demonstrate how the RMC and RKC coreset-based
strategies presented in Sections~\ref{sec:mat} and
\ref{sec:knap} can be efficiently implemented in the MapReduce
(Subsection~\ref{subsec:MR}) and Streaming
(Subsection~\ref{subsec:STR}) settings. We will refer to a generic
instance $(M=(V,I),z)$ of RMC, with matroid $M$ of rank $k$, and a
generic instance $(V,z,\mathbf{w})$ of RKC, whose optimal solution of
minimum cardinality has size $k$. $D$ will denote the
doubling dimension of $V$.

\subsection{MapReduce implementations} \label{subsec:MR}
A MapReduce (MR) algorithm executes as a sequence of \emph{rounds}
where, in a round, a multiset of key-value pairs is transformed into a
new multiset of pairs by applying a given reduce function, referred to
as \emph{reducer}, independently to each subset of pairs having the
same key.  The model is parameterized by the total aggregate memory
available to the computation, denoted with ${\cal M}_A$, and the
maximum amount of memory locally available to each reducer, denoted
with ${\cal M}_L$. The typical goal for a MR algorithm is to run in
as few rounds as possible while keeping ${\cal M}_A$ 
(resp., ${\cal M}_L$) linear
(resp., substantially sublinear) in the input size
\cite{Dean2004,PietracaprinaPRSU12}.

A key feature of our coreset constructions for both the RMC and RKC
problems is their \emph{composability} \cite{IndykMMM14}, namely the
fact that they can be applied in parallel to subsets of an arbitrary
partition of $V$ in one MapReduce round.  Then, in a subsequent round
a solution can be computed sequentially from the coreset $T$ obtained
as the union of the coresets extracted from each subset of the
partition, using a single reducer which can fit $T$, whose size is
much smaller than $|V|$, in its local memory.

\noindent {\bf RMC problem.}  A coreset $T$ satisfying Conditions C1
and C2 of Lemma~\ref{lem:mat_one} can be constructed in one MapReduce
round as follows. Partition $V$ evenly but arbitrarily into $\ell$
disjoint subsets $V_1, \dots, V_\ell$, and assign each $V_q$ to a
distinct reducer, which builds a coreset $T^{(q)}$ for $V_q$ using the
construction described in Section~\ref{sec:mat} instantiated with the
$(\beta=2)$-approximation algorithm by Gonzalez \cite{Gonzalez85} to
find the first $k+z$ centers.  A straightforward adaptation of the
proof of Lemma~\ref{lem:c1c2} shows that $T = \cup_{1 \leq q \leq
  \ell} T^{(q)}$ satisfies conditions C1 and C2.  Setting $\ell =
\sqrt{|V|/(k(k+z))}$ and applying Theorem~\ref{thm:RMCcorsize},
we have that
$|T| = O(\sqrt{|V| k(k+z)}(8/\epsilon')^D)$. Observe that for a large
range of values of $k$ and $z$, the size of each $V_q$ and the size of
$T$ are substantially sublinear in $|V|$.  The following theorem is an
immediate consequence of the above discussion and of the results of
Section~\ref{sec:mat}.
\begin{theorem} \label{impl:MRmat}
Let $\epsilon \in (0,1)$.
There exists a 2-round MapReduce algorithm that for the RMC instance
$(M=(V,I),z)$ computes a  $(3+\epsilon)$-approximate solution using
${\cal M}_A = O(|V|)$ and ${\cal M}_L=O(\sqrt{|V| k(k+z)}(56/\epsilon)^D)$.
\end{theorem}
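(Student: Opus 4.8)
The plan is to realize the two-phase strategy of Section~\ref{sec:mat} as a two-round MapReduce algorithm by exploiting the \emph{composability} of the coreset construction. In the first round, map the points of $V$ to $\ell=\sqrt{|V|/(k(k+z))}$ keys so that each reducer receives one block $V_q$ of an even, arbitrary partition $V_1,\dots,V_\ell$; reducer $q$ runs the construction of Section~\ref{sec:mat} \emph{locally} on $V_q$ (using Gonzalez's $\beta=2$ algorithm to pick the first $k+z$ centers of $V_q$), obtaining a coreset $T^{(q)}\subseteq V_q$ together with multiplicities $m_i=|\{j\in V_q:p(j)=i\}|$ for $i\in T^{(q)}$, and emits the pairs $(i,m_i)$ all under a single key. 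In the second round, one reducer collects $T=\bigcup_{q}T^{(q)}$ (the $T^{(q)}$ are pairwise disjoint since the $V_q$ are, so $\sum_{i\in T}m_i=|V|$), restricts the matroid to $T$, and runs the $3$-approximate sequential RMCM algorithm of Theorem~\ref{thm:approxmult} on $(M_T,z,\mathbf{m})$, returning its output $S$.

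For correctness, the first goal is to show that the \emph{union} coreset $T$ still satisfies Conditions C1 and C2 of Lemma~\ref{lem:mat_one} for $\epsilon'=\epsilon/7$; this is where I would invest most of the care. I would argue with respect to the \emph{combined} clustering $\{C^{(q)}_\ell\}_{q,\ell}$, which is a genuine partition of $V$ since each $V_q$ is partitioned by its local clusters. Condition C1 is immediate block by block: a point $j\in V_q\cap C^{(q)}_\ell$ has its proxy in the maximal independent set $Y^{(q)}_\ell\subseteq C^{(q)}_\ell$, hence $d(j,p(j))\le 2 r^{(q)}_{T_\tau}$, and the chain of inequalities in the proof of Lemma~\ref{lem:c1c2} bounds this by $\epsilon' r^*(M,z)$ once one observes that the optimal $(k+z)$-center cost of the local instance on $V_q$ is within a constant factor of $r^*(M,z)$ (the global RMC optimum $S^*$, together with the outliers, covers all but $\leq z$ points of $V_q\subseteq V$ within radius $r^*(M,z)$, and restricting centers to $V_q$ costs only a constant, which the construction absorbs by taking $\epsilon'$ slightly smaller). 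For Condition C2 I would re-run the incremental mapping of the proof of Lemma~\ref{lem:c1c2} verbatim but driven by the combined clustering: given $X\in I$, process its elements one at a time, and when the current element lies in $C^{(q)}_\ell$ apply the extended augmentation property (Lemma~\ref{lemma:extAug}) with $B=Y^{(q)}_\ell$ and $V'=C^{(q)}_\ell$ to reroute it into $T^{(q)}$ while keeping the running set globally independent; since $\pi_X(x)$ and $x$ always share a cluster $C^{(q)}_\ell$, the bound $d(x,\pi_X(x))\le 2 r^{(q)}_{T_\tau}\le\epsilon' r^*(M,z)$ follows exactly as before. Thus $T$ enjoys Properties P1 and P2 by Lemma~\ref{lem:mat_one}, and Theorem~\ref{thm:mat_two} with $\alpha=3$ and $\epsilon'=\epsilon/(2\alpha+1)=\epsilon/7$ gives that $S$ is a $(3+\epsilon)$-approximate solution to $(M=(V,I),z)$.

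It then remains to account for memory and rounds. No reducer ever holds more than $O(|V|)$ data and the aggregate is $O(|V|)$, so ${\cal M}_A=O(|V|)$; the algorithm uses exactly two rounds (one to build and reshuffle the local coresets, one to solve on $T$), and the final polynomial-time RMCM computation runs inside the second round's single reducer on an input of size $|T|\ll|V|$. For the local memory: in round one each reducer holds $|V_q|=|V|/\ell=\sqrt{|V|k(k+z)}$ points, while Theorem~\ref{thm:RMCcorsize} (with $\beta=2$ and $\epsilon'=\epsilon/7$, so $4\beta/\epsilon'=56/\epsilon$) bounds $|T^{(q)}|=O(k(k+z)(56/\epsilon)^D)$; in round two the single reducer holds $|T|\le\sum_{q}|T^{(q)}|=O(\ell\,k(k+z)(56/\epsilon)^D)=O(\sqrt{|V|k(k+z)}(56/\epsilon)^D)$, which dominates the round-one requirement. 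Hence ${\cal M}_L=O(\sqrt{|V|k(k+z)}(56/\epsilon)^D)$, as claimed.

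The main obstacle is the composability step: verifying that gluing together coresets built independently on the blocks $V_q$ preserves Condition C2 — i.e., that the incremental extended-augmentation construction still produces a \emph{globally} independent set even though its clusters originate from different reducers — and that the per-block radii $r^{(q)}_{T_\tau}$ are correctly controlled by the \emph{global} optimum $r^*(M,z)$ rather than by the possibly larger local $(k+z)$-center optima on the $V_q$. Everything else is a direct instantiation of the sequential analysis of Section~\ref{sec:mat} together with routine bookkeeping of rounds and memory.
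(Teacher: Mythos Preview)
Your proposal is correct and follows essentially the same approach as the paper: partition $V$ into $\ell=\sqrt{|V|/(k(k+z))}$ blocks, build the Section~\ref{sec:mat} coreset locally on each block with Gonzalez's $\beta=2$ algorithm, union the pieces, and run the sequential $3$-approximate RMCM algorithm on $T$ in a second round. The paper compresses your C1/C2 argument for the union into the phrase ``a straightforward adaptation of the proof of Lemma~\ref{lem:c1c2}'' and derives the same memory bound $|T|=O(\sqrt{|V|k(k+z)}(8/\epsilon')^D)=O(\sqrt{|V|k(k+z)}(56/\epsilon)^D)$; your spelled-out version (combined clustering plus the incremental extended-augmentation step of Lemma~\ref{lemma:extAug}) is exactly what that adaptation amounts to, and the subtlety you flag about controlling the per-block radii $r^{(q)}_{T_\tau}$ by the global $r^*(M,z)$ is the one nontrivial point hidden in the paper's ``straightforward.''
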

\noindent {\bf RKC problem.}
By reasoning as for the RMC
problem, one can easily show that each call to {\sc
  CoresetComputeAndTest}$(\tau)$ in Algorithm {\sc
  RKnapCenter}$(\epsilon,\alpha=3)$ can be implemented in 2 MapReduce
rounds with linear aggregate memory and local memory
$O(\sqrt{|V|\tau})$. Thus, by the bound on the final value
of $\tau$ proved in Theorem~\ref{thm:RKCcorsize}, Algorithm {\sc
  RKnapCenter} requires $O(\log(k+z)+D\log(1/\epsilon))$ MapReduce rounds
overall. Since, the round
complexity is a key performance indicator for efficiency in MapReduce,
we can reduce the number of rounds to $O(1/\eta)$ for any
$\eta \in (0,1)$ by substituting the progression $\tau \leftarrow 2
\tau$ (Line 5 of {\sc RKnapCenter}) with $\tau \leftarrow |V|^{\eta}
\tau$, at the expense of an extra factor $|V|^{\eta/2}$ in the local
memory requirement.  The following theorem is an immediate consequence
of the above discussion and of the results of Section~\ref{sec:knap}.

\begin{theorem} \label{impl:MRknap}
Let $\epsilon \in (0,1)$.  There exists a MapReduce algorithm that for
the RKC instance $(V,z,\mathbf{w})$ computes a
$(3+\epsilon)$-approximate solution with ${\cal M}_A = O(|V|)$, 
using
either $O(\log(k+z)+D\log(1/\epsilon))$ rounds and local memory ${\cal
  M}_L=O(\sqrt{|V| (k+z)}(c/\epsilon)^D)$, with $c \in O(1)$, or
$O(1/\eta)$ rounds with an extra factor $O(|V|^{\eta/2})$ in the
local memory size, for any $\eta \in (0,1)$.
\end{theorem}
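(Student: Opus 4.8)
The plan is to realize Algorithm~{\sc RKnapCenter}$(\epsilon,\alpha=3)$ in MapReduce, using in Line~16 the sequential $3$-approximation for RKCM guaranteed by Theorem~\ref{thm:approxmult}; by Theorem~\ref{thm:knap_two} this already pins down the $(3+\epsilon)$ approximation factor, so the work is entirely in implementing each call to {\sc CoresetComputeAndTest}$(\tau)$ efficiently and in bounding the number of such calls. I would implement one call in two rounds. In the first round, partition $V$ evenly but arbitrarily into $\ell=\sqrt{|V|/\tau}$ blocks $V_1,\dots,V_\ell$, one per reducer; each reducer runs Gonzalez's algorithm~\cite{Gonzalez85} on its block to depth $\tau$, obtaining a $\beta$-approximate $\tau$-clustering of $V_q$ together with its radius $r_1^{(q)}=\max_{j\in V_q}d(j,\Gamma_q)$, and emits, as its local coreset $T^{(q)}$, one minimum-weight point per local cluster tagged with the cluster size, plus the value $r_1^{(q)}$. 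In the second round a single reducer collects $T=\bigcup_q T^{(q)}$ (of size $O(\ell\tau)=O(\sqrt{|V|\tau})$), sets $r_1=\max_q r_1^{(q)}$, runs ${\cal A}_{\rm RKCM}$ on $(T,z,\mathbf{w}_T,\mathbf{m})$, computes $r_2$ from the returned $S$, and evaluates the stopping test.

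Next I would argue that this parallel coreset is as good as the sequential one, via the same composability reasoning sketched for RMC in Section~\ref{sec:implementation}. The key observation is that snapping each center of an optimal $\tau$-center solution of $V$ to a nearby point of $V_q$ produces a $\tau$-center solution for $V_q$ of radius at most $2\rho^*(V,\tau)$, so $\rho^*(V_q,\tau)\le 2\rho^*(V,\tau)$ for every $q$; hence the redefined $r_1$ remains within a constant factor of $\rho^*(V,\tau)$ and every $j\in V$ is within $2r_1$ of the representative of its (local) cluster in $T$. With $r_1$ read as this maximum local radius, both parts of Lemma~\ref{lemma:knap_one} carry over unchanged — all triangle-inequality chains there only compare a point to a representative of its own cluster — and the weight feasibility of the witness set $X$ still holds because $T$ retains a minimum-weight point per block and $|X|\le|S^*_V|$. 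Therefore Theorem~\ref{thm:RKCcorsize} still applies (with $c$ enlarged by the constant loss of the snapping step), giving a terminating granularity $\tau_f=O((k+z)(c/\epsilon)^D)$; since the loop also certainly halts once $\tau\ge|V|$, where the $\tau$-center radius and hence $r_1$ vanish, we have $\tau_f<2|V|$ as well, so the do--while loop runs $O(\log\tau_f)=O(\log(k+z)+D\log(1/\epsilon))$ iterations, i.e.\ that many rounds, while the total data moved stays $O(|V|)$.

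For the memory bounds I would note that $\ell=\sqrt{|V|/\tau}$ gives blocks of size $\sqrt{|V|\tau}$ and a merged coreset of size $O(\sqrt{|V|\tau})$, both bounded by $O(\sqrt{|V|(k+z)}(c/\epsilon)^{O(D)})$ over all tested $\tau\le\tau_f$. For the round-reduced variant, replacing the progression $\tau\leftarrow 2\tau$ by $\tau\leftarrow|V|^{\eta}\tau$ cuts the iteration count to $O(\log\tau_f/(\eta\log|V|))=O(1/\eta)$ (using $\tau_f<2|V|$), while the largest tested $\tau$ — hence every block and the coreset — grows by at most a factor $|V|^{\eta}$, inflating ${\cal M}_L$ by $O(|V|^{\eta/2})$.

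I expect the only genuinely delicate point to be this composability check: verifying that substituting the union of per-block $\tau$-clusterings for the single global solution $\Gamma$ does not weaken the estimates of Lemma~\ref{lemma:knap_one} and Theorem~\ref{thm:RKCcorsize}. Once $r_1$ is reinterpreted as the maximum local radius and the bound $\rho^*(V_q,\tau)\le 2\rho^*(V,\tau)$ is in hand, everything else is bookkeeping on constants that can be absorbed into $c$, and the round/memory trade-off follows from elementary manipulation of the geometric progression of the $\tau$'s.
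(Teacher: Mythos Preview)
Your proposal is correct and follows essentially the same route as the paper: implement each call to {\sc CoresetComputeAndTest}$(\tau)$ in two rounds by partitioning $V$ into $\sqrt{|V|/\tau}$ blocks, running a local $\tau$-clustering in each, and merging the per-block coresets, then bound the number of iterations via Theorem~\ref{thm:RKCcorsize} and obtain the $O(1/\eta)$ variant by coarsening the geometric progression to $\tau\leftarrow|V|^{\eta}\tau$. You spell out the composability step (the bound $\rho^*(V_q,\tau)\le 2\rho^*(V,\tau)$ and the verification that Lemma~\ref{lemma:knap_one} survives with $r_1$ taken as the maximum local radius) more explicitly than the paper, which just points back to the RMC argument, and you also make explicit the trivial upper bound $\tau_f<2|V|$ needed to justify the $O(1/\eta)$ round count; both additions are correct and harmless.
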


\subsection{Streaming implementations} \label{subsec:STR}
In the streaming setting \cite{HenzingerRR98} the computation is
performed by a single processor with a small-size working memory, and
the input is provided as a continuous stream of items which is usually
too large to fit in the working memory. Typically, streaming
strategies aim at a single pass on the input but in some cases few
additional passes may be needed.  Key performance indicators are the
size of the working memory and the number of passes.

\noindent {\bf RMC problem.}  Our streaming implementation of the
coreset construction devised in Section~\ref{sec:mat} combines the
scaling algorithm of \cite{McCutchen08} with ideas introduced in
\cite{Kale19,CeccarelloPP20}. For ease of presentation, we first
describe a 2-pass implementation, and will then argue how the two
passes can be merged into a single one.  The first pass uses the
scaling algorithm to determine the set $T_{k+z}$ of $k+z$ centers
prescribed by the construction, and an upper bound $r'$ to
$r_{T_{k+z}} = \max_{j \in V}d(j,T_{k+z})$. By the analysis in
\cite{McCutchen08}, we know that $r' \leq \beta \rho^*(V,k+z)$, where
$\beta=2+\delta$ for some (arbitrarily fixed) $\delta \in (0,1)$, and
the working memory required by the computation is $O((k+z)(1/\delta)
\log (1/\delta))$. In the second pass, we incrementally build
$T_{\tau}$ using Hochbaum and Shmoys' strategy (which is naturally
streamlined) w.r.t.\ $(\epsilon'/2\beta)r'$. Concurrently, for each $i
\in T_{\tau}$ we greedily maintain the number $m_i$ of all the
points closest to $i$ seen so far and a maximal independent set of
these points.  The final coreset $T$ is the union of the
$\tau=|T_{\tau}|$ resulting independent sets. A similar strategy was
employed in \cite{Kale19}.  At the end of the pass the final solution
is computed by running the sequential 3-approximation algorithm
claimed in Theorem~\ref{thm:approxmult} on the RMCM instance
$(M_T=(T,I_T),z,\mathbf{m}=\{m_i : i \in T\})$, where $M_T$ is the
restriction of matroid $M$ to the points of $T$.  It is immediate to
see that $T$ is such that Theorems~\ref{thm:RMCcorsize}
and~\ref{thm:mat_two} hold.  The working memory required by the second
pass and by the whole algorithm is $O(|T|) =
O(k(k+z)(4\beta/\epsilon')^D)$.

The two phases described above can be merged into one by computing
$T_{\tau}$, together with the multiplicities and the related
independent sets, concurrently with the construction of $T_{k+z}$. (A
similar idea was recently employed for diversity maximization in
\cite{CeccarelloPP20}.) Specifically, the scaling algorithm maintains
$O((1/\delta) \log (1/\delta) )$ estimates of the value $r'$, with
respect to the points seen so far, and returns the tightest such
estimate at the end of the pass.  When a point $j \in V$ arrives, it
is processed according to the scaling algorithm and, concurrently, it
as also processed as prescribed by the second pass, for all available
estimates of $r'$. At the end, the algorithm returns the coreset
computed according to the best estimate. An easy argument shows that,
in this fashion, we obtain the same guarantees of the 2-pass algorithm
at the expense of a $O((1/\delta) \log (1/\delta) )$ blow-up in the
working memory space.  The following theorem is an immediate
consequence of the above discussion and of the results of
Section~\ref{sec:knap}.

\begin{theorem} \label{impl:STRmat}
Let $\epsilon, \delta \in (0,1)$.  There exists a 1-pass Streaming
algorithm that for the RMC instance $(M=(V,I),z)$ computes a
$(3+\epsilon)$-approximate solution using working memory of size
$O((1/\delta)\log (1/\delta)k(k+z)(28(2+\delta)/\epsilon)^D)$.
\end{theorem}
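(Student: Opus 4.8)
The plan is to instantiate the coreset construction of Section~\ref{sec:mat} with streaming primitives, check that the resulting coreset still satisfies Conditions C1 and C2 of Lemma~\ref{lem:mat_one} (so that Theorems~\ref{thm:RMCcorsize} and~\ref{thm:mat_two} apply), and then collapse the construction into a single pass. First I would describe a clean 2-pass version. In the first pass I would run McCutchen's scaling algorithm~\cite{McCutchen08} on the stream to obtain a set $T_{k+z}$ of $k+z$ centers together with a value $r'$ upper-bounding $r_{T_{k+z}}=\max_{j\in V}d(j,T_{k+z})$ and satisfying $r'\le\beta\rho^*(V,k+z)$ with $\beta=2+\delta$; by the analysis of~\cite{McCutchen08} this uses $O((k+z)(1/\delta)\log(1/\delta))$ working memory. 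In the second pass I would run the (naturally streamable) Hochbaum--Shmoys greedy selection with threshold $(\epsilon'/(2\beta))r'$ to build $T_\tau$, and concurrently, for each current center of $T_\tau$, maintain the count $m_i$ of stream points assigned to it so far and a maximal independent set of those points; the final coreset $T$ is the union of the $\tau=|T_\tau|$ maximal independent sets. At the end of the pass I would feed the RMCM instance $(M_T=(T,I_T),z,\mathbf{m})$ to the sequential $3$-approximation algorithm of Theorem~\ref{thm:approxmult}.

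The correctness argument mirrors Lemma~\ref{lem:c1c2}: Condition C1 holds because every point lies within distance $2r_{T_\tau}\le(\epsilon'/\beta)r'\le\epsilon'\rho^*(V,k+z)\le\epsilon' r^*(M,z)$ of its proxy, using $\rho^*(V,k+z)\le r^*(M,z)$; Condition C2 holds by the same incremental application of the extended augmentation property (Lemma~\ref{lemma:extAug}) used offline, since we retain a maximal independent set per cluster. Hence Theorems~\ref{thm:RMCcorsize} and~\ref{thm:mat_two}, invoked with $\alpha=3$ and $\epsilon'=\epsilon/(2\alpha+1)=\epsilon/7$, give the $(3+\epsilon)$ ratio, and the size bound $|T|=O(k(k+z)(4\beta/\epsilon')^D)=O(k(k+z)(28(2+\delta)/\epsilon)^D)$ bounds the working memory of the second pass, which dominates the whole 2-pass algorithm.

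To obtain a single pass, I would run the two passes concurrently: the scaling algorithm maintains $O((1/\delta)\log(1/\delta))$ candidate estimates of $r'$ relative to the prefix seen so far, so for each arriving point I simultaneously update the scaling data structures and perform the second-pass bookkeeping ($T_\tau$, multiplicities, per-cluster independent sets) for every current estimate; at the end the algorithm outputs the coreset built from the tightest estimate, which is exactly the one the 2-pass algorithm would have used. The main obstacle is arguing that this concurrent bookkeeping is consistent for whichever estimate is ultimately selected — that a valid $T_\tau$ and its auxiliary data have been maintained for that estimate throughout the entire stream — which is precisely why a fixed pool of estimates, rather than estimates discovered late, is used; this costs an $O((1/\delta)\log(1/\delta))$ blow-up in memory and yields the stated bound of $O((1/\delta)\log(1/\delta)k(k+z)(28(2+\delta)/\epsilon)^D)$.
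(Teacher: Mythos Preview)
Your proposal is correct and follows essentially the same approach as the paper: a 2-pass version using the McCutchen scaling algorithm for $T_{k+z}$ and the streamable Hochbaum--Shmoys selection with per-cluster independent sets, then merging into one pass by running the second-pass bookkeeping in parallel for all $O((1/\delta)\log(1/\delta))$ current radius estimates. Your instantiation $\alpha=3$, $\epsilon'=\epsilon/7$, and the resulting constant $4\beta/\epsilon'=28(2+\delta)/\epsilon$ match the paper exactly, and you even articulate the consistency concern about the selected estimate that the paper waves off as an ``easy argument.''
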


\noindent {\bf RKC problem.}  
A streaming implementation of Algorithm {\sc
  RKnapCenter$(\epsilon,\alpha)$} from Section~\ref{sec:knap} can be
accomplished using one pass for every iteration of the do-while loop
(Lines $2 \div 5$ of the algorithm).  Specifically, consider an
iteration of the loop for a certain value of $\tau$.  The scaling
algorithm by \cite{McCutchen08} is used to compute a
$\beta=(2+\delta)$-approximate solution $\Gamma$ to $\tau$-center on
$V$.  The algorithm returns an upper bound $r_1$ to
$\max_{j \in V}d(j,\Gamma)$ and, as noted before, $r_1
\leq \beta \rho^*(V,k+z)$.  The algorithm can also be easily
adapted to return, together with each point of $j \in \Gamma$, the
number $m_j$ of all points of $V$ at distance at most $r_1$ from $j$
which have been (implicitly) assigned to the cluster associated with
$j$, and the point $i_j$ of minimum weight among these $m_j$
points. The sequential RKCM algorithm ${\cal A}_{\rm RKCM}$ is then
run on $T =\{i_j : j \in \Gamma\}$ using $m_j$ as multiplicity of
$i_j$, for every $j$, and the original weights of the points.  The
required working memory is $O((1/\delta)\log (1/\delta) \tau)$.
Theorems~\ref{thm:RKCcorsize} and \ref{thm:knap_two}, and
Corollary~\ref{cor:knap}, immediately imply that using
$O(\log(k+z)+D\log(1/\epsilon))$ passes and the RKCM 3-approximation algorithm
claimed in Theorem~\ref{thm:approxmult} as algorithm ${\cal A}_{\rm
  RKCM}$, a $(3+\epsilon)$-approximate solution is finally computed,
and the largest working memory required by the passes is
$O((1/\delta)\log (1/\delta) (k+z)(c/\epsilon)^D)$, with
$c=(2+\delta)14(3+\epsilon)$.

In order to reduce the number of passes, we can use a coarser
progression for the values of $\tau$ by substituting $\tau \leftarrow
2 \tau$ (Line 5 of {\sc RKnapCenter}) with $\tau \leftarrow |V|^{\eta}
\tau$, for some $\eta \in (0,1)$, as was done in the
MapReduce implementation to reduce the number of rounds. In this fashion,
the number of passes shrinks to $O(1/\eta)$, at the expense of 
an extra factor $|V|^{\eta}$ in the working memory requirements.
The following theorem is an immediate consequence of the above
discussion and of the results of Section~\ref{sec:knap}.

\begin{theorem} \label{impl:STRknap}
Let $\epsilon, \delta \in (0,1)$ and $c=(2+\delta)14(3+\epsilon)$.  There exists a Streaming algorithm that
for the RKC instance $(V,z,\mathbf{w})$ computes a
$(3+\epsilon)$-approximate solution with $O(\log(k+z)+D\log(1/\epsilon))$
passes and working memory size $O((1/\delta)\log (1/\delta)
(k+z)(c/\epsilon)^D)$, or $O(1/\eta)$ passes with 
an extra factor $O(|V|^{\eta})$
in the working memory size, for any $\eta \in (0,1)$.
\end{theorem}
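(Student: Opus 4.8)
The plan is to assemble the statement from three ingredients already established in the paper: (i) the correctness and approximation guarantee of the sequential strategy {\sc RKnapCenter} (Theorems~\ref{thm:RKCcorsize}, \ref{thm:knap_two} and Corollary~\ref{cor:knap}); (ii) the streaming realizability of a single iteration of its do-while loop via McCutchen's scaling algorithm~\cite{McCutchen08}; and (iii) the availability of a sequential $3$-approximation for RKCM from Theorem~\ref{thm:approxmult}. First I would fix $\alpha=3$ and instantiate {\sc RKnapCenter}$(\epsilon,3)$, noting that since $\epsilon\in(0,1)$ the $\beta$ used in the streaming $\tau$-center computation is $\beta=2+\delta$ rather than $2$; I would observe that Theorem~\ref{thm:RKCcorsize} was stated for a generic $\beta$-approximation, so the final value $\tau_f$ of $\tau$ satisfies $\tau_f=O((k+z)(c/\epsilon)^D)$ with $c=(2+\delta)\cdot 14\cdot(3+\epsilon)$, and Theorem~\ref{thm:knap_two} still yields a $(3+\epsilon)$-approximation at termination.

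Next I would spell out the streaming realization of one pass. For a given $\tau$, McCutchen's algorithm computes in a single pass a $\beta$-approximate solution $\Gamma$ to $\tau$-center on $V$ together with an upper bound $r_1$ on $\max_{j\in V}d(j,\Gamma)$, using working memory $O((1/\delta)\log(1/\delta)\,\tau)$. I would argue that the algorithm is easily augmented to maintain, for each center $j\in\Gamma$, the count $m_j$ of points implicitly assigned to $j$'s cluster and the minimum-weight such point $i_j$, at no asymptotic memory cost. This reproduces exactly the coreset $T=\{i_j:j\in\Gamma\}$ with multiplicities $\mathbf m$ that {\sc CoresetComputeAndTest}$(\tau)$ would build sequentially (the only discrepancy being the use of a $\beta$-approximate rather than exact $\tau$-center, which is precisely what Theorem~\ref{thm:RKCcorsize} already accounts for). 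Running ${\cal A}_{\rm RKCM}$ (the Theorem~\ref{thm:approxmult} $3$-approximation) on the RKCM instance $(T,z,\mathbf w_T,\mathbf m)$ at the end of the pass yields $(S,r_1,r_2)$, and the stopping test $\alpha(4\alpha+2)r_1\le\epsilon(r_2-4\alpha r_1)$ is evaluated. Since the loop executes $O(\log\tau_f)=O(\log(k+z)+D\log(1/\epsilon))$ times and each iteration is one pass, and the working memory at iteration $\tau$ is $O((1/\delta)\log(1/\delta)\,\tau)\le O((1/\delta)\log(1/\delta)\,\tau_f)=O((1/\delta)\log(1/\delta)(k+z)(c/\epsilon)^D)$, the first half of the theorem follows.

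Finally I would treat the pass-reduction variant: replacing $\tau\leftarrow 2\tau$ by $\tau\leftarrow|V|^{\eta}\tau$ shrinks the number of iterations to $O(\log_{|V|^{\eta}}\tau_f)=O(1/\eta)$ for fixed $k,z,\epsilon,D$; since the last tested value of $\tau$ can now overshoot $\tau_f$ by a factor up to $|V|^{\eta}$, the working memory picks up exactly that extra factor $O(|V|^{\eta})$, while the stopping condition — being monotone in the sense used in the proof of Theorem~\ref{thm:RKCcorsize}, namely that it holds for every $\tau\ge\tau_f$ — still guarantees termination with a $(3+\epsilon)$-approximate solution. The main obstacle, and the only point requiring care, is verifying that feeding a $\beta$-approximate $\tau$-center solution (with $\beta=2+\delta$) into the machinery does not degrade the approximation beyond $3+\epsilon$: this is handled because Theorem~\ref{thm:RKCcorsize} already carries $\beta$ through the constant $c$ and Theorem~\ref{thm:knap_two} shows the $(\alpha+\epsilon)$ guarantee follows purely from the stopping condition, independently of how $r_1,r_2$ were obtained. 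Everything else is bookkeeping, so I would keep the write-up to a short paragraph invoking these results.
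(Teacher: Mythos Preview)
Your proposal is correct and follows essentially the same approach as the paper: implement each iteration of {\sc RKnapCenter}$(\epsilon,3)$ as one streaming pass using the scaling algorithm of~\cite{McCutchen08} with $\beta=2+\delta$ (augmented to track cluster sizes and minimum-weight representatives), invoke Theorems~\ref{thm:RKCcorsize} and~\ref{thm:knap_two} together with the $3$-approximation of Theorem~\ref{thm:approxmult} to get the pass count and memory bound, and then coarsen the geometric progression to $\tau\leftarrow|V|^{\eta}\tau$ for the $O(1/\eta)$-pass variant. Your observation that the $(\alpha+\epsilon)$ guarantee of Theorem~\ref{thm:knap_two} depends only on the stopping condition, not on how $r_1$ was produced, is exactly the reason the switch from $\beta=2$ to $\beta=2+\delta$ costs nothing in approximation quality.
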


\section{Concluding remarks} \label{sec:conclusions}
It is not difficult to show that the techniques employed for the RMC
problem can also be used to extend the algorithms presented in
\cite{CeccarelloPP19,CeccarelloPP20} for diversity maximization under
partition and transversal matroid constraints, to work for all possible
matroids, also improving their space requirements.  The development
of 2-round/1-pass MapReduce/Streaming algorithms for the RKC problem
with low memory requirements and approximation ratios close to those
of the best sequential solutions, remain interesting open problems.


\bibliographystyle{plainurl}

\end{document}